\documentclass[12pt]{amsart}

\usepackage{moreverb}
\usepackage{cite}

\usepackage{graphicx}
\usepackage{amsfonts}
\usepackage{amssymb}
\usepackage{latexsym}
\usepackage{amsmath}
\usepackage{color}

\usepackage{float}


\newtheorem{theorem}{Theorem}[section]
\newtheorem{lemma}{Lemma}[section]

\newcommand\BibTeX{{\rmfamily B\kern-.05em \textsc{i\kern-.025em b}\kern-.08em
T\kern-.1667em\lower.7ex\hbox{E}\kern-.125emX}}

\newcommand{\figref}[1]{Figure~\ref{#1}}


    \usepackage[breakable]{tcolorbox}
    \tcbset{nobeforeafter} 
    \usepackage{float}
    \floatplacement{figure}{H} 

    \usepackage[T1]{fontenc}
    \usepackage{mathpazo}

    \usepackage{graphicx}
    \makeatletter
    \def\maxwidth{\ifdim\Gin@nat@width>\linewidth\linewidth
    \else\Gin@nat@width\fi}
    \makeatother
    \let\Oldincludegraphics\includegraphics
    \renewcommand{\includegraphics}[1]{\Oldincludegraphics[width=.8\maxwidth]{#1}}
    \usepackage{caption}

    \usepackage{adjustbox} 
    \usepackage{xcolor} 
    \usepackage{enumerate} 
    \usepackage{geometry} 
    \usepackage{amsmath} 
    \usepackage{amssymb} 
    \usepackage{textcomp} 
    \AtBeginDocument{%
        \def\PYZsq{\textquotesingle}
    }
    \usepackage{upquote} 
    \usepackage{eurosym} 
    \usepackage[mathletters]{ucs} 
    \usepackage[utf8x]{inputenc} 
    \usepackage{fancyvrb} 
    \usepackage{grffile} 
    \usepackage{hyperref}
    \usepackage{longtable} 
    \usepackage{booktabs}  
    \usepackage[inline]{enumitem} 
    \usepackage[normalem]{ulem} 
    \usepackage{mathrsfs}


    \geometry{verbose,tmargin=1in,bmargin=1in,lmargin=1in,rmargin=1in}


\begin{document}
\title[Average predictive control]{Average predictive control for nonlinear discrete dynamical systems }

\author{D. Dmitrishin}
\address{Department of Applied Mathematics\\Odessa National Polytechnic University\\Odessa 65044, Ukraine}
\email{dmitrishin@opu.ua}

\author{I.E. Iacob}
\address{Department of Mathematical Sciences\\Georgia Southern University\\Statesboro, GA 30460, USA}
\email{ieiacob@GeorgiaSouthern.edu}

\author{A. Stokolos}
\address{Department of Mathematical Sciences\\Georgia Southern University\\Statesboro, GA 30460, USA}
\email{astokolos@GeorgiaSouthern.edu}

\subjclass[2010]{Primary: 37F99; Secondary: 34H10. }

\keywords{Non-linear discrete systems, chaos control, mixing of system states.}


\begin{abstract}
We explore the problem of stabilization of unstable periodic orbits in discrete nonlinear dynamical systems. This work proposes the generalization  of predictive control method for resolving the stabilization problem. Our method embodies the development of control method proposed by B.T. Polyak. The control we propose uses a linear (convex) combination of iterated functions. With the proposed method auxiliary, the problem of robust cycle stabilization for various cases of its multipliers localization is solved. An algorithm for finding a given length cycle when its multipliers are known is described as a particular case of our method application. Also, we present numerical simulation results for some well-known mappings and the possibility of further generalization of this method.

\end{abstract}
\keywords{Non-linear discrete systems, stabilization, predictive control.}

\maketitle

\section{Introduction}
\label{intro}
Nonlinear dynamical systems are often characterized by extremely unstable movements in the phase space, defined as chaotic movements \cite{1}.
In practice, it is generally desirable to suppress or prevent such chaotic behavior due to its adverse effect on the physical systems normal operation. Due to its theoretical significance and engineering applicability, much attention has been paid to the problem of chaos controlling in various fields and numerous studies \cite{29,3,6n,2}. By chaos control we mean  small external influences on the system or a small change in the system structure in order to transform the system chaotic behavior into a regular (or chaotic, still specific with other properties) one \cite{4}.

It is assumed that the dynamical system includes a chaotic attractor, which contains a countable set of unstable cycles with different periods. If, by using some control effect, a certain cycle is stabilized locally, the system path will remain in its neighborhood, i.e. regular movements will be observed in the system. Hence, one of the ways for chaos controlling refers to the local stabilization of certain orbits from a chaotic attractor.

The problem of stabilizing cycles is closely related to the problem of finding periodic points. The various control schemes \cite{6,30,20} that were proposed for solving these problems can be divided into two large groups: direct and indirect methods. The indirect methods either use the initial mapping $T$ iterations or imply the construction a system of which order is $T$  times greater than the initial system order ($T$ being the desired cycle length). Then one of the methods of finding a fixed point is applied. The most common among fixed point finding techniques is the Newton-Raphson relaxation method and its further modifications \cite{9,7,2,8}. The next step is to select periodic points from the entire set of fixed points. In direct methods, all the points in the cycle are found concurrently, i.e. the whole cycle is stabilized. In this case, the initial system is closed by control, based on the feedback principle \cite{12,13,10,11}. Among such control schemes, the most simple in terms of physical implementation are the linear ones. However, they have significant limitations, as they can only be applied to a narrow domain of the space of parameters that are part of the initial nonlinear system \cite{1n,3n,14}.  To overcome the restricted stabilization condition
Ushio and Yamamoto \cite{5n} introduced a prediction-based feedback control method for discrete chaotic systems with accurate mathematical model. B.T. Polyak \cite{15} (see also \cite{PG}) proposed a direct predictive control method that works well for one-dimensional as well as multidimensional maps. L. Shalby \cite{S} used predictive feedback control method for stabilization of continuous time systems.

Another possible control schemes classification into two groups: methods using the Jacobi matrix and methods not based on this matrix. Naturally, it is assumed that the Jacobi matrix at the cycle points is not properly known, otherwise it would be possible to use the whole powerful apparatus of the linear control theory applied to systems linearized in the cycle neighborhood. The Jacobi matrix is an indispensable attribute of Newton-Raphson-type methods. This matrix is also used in one of the modifications derived from Polyak predictive control method.

One of the main disadvantages of Polyak scheme refers to the need for knowing the Jacobi matrix for the cycle, or at least the need for sufficiently good estimates of the cycle multipliers. The research exposed herein is purposed to improve the Polyak method by replacing it with mixed predicted values. A Jacobi matrix representation for the $T$ cycle at a controlled system is found through the Jacobi matrix of the same cycle in the initial system. Therefore, the correspondence between the cycle multipliers of the open loop system and those of closed loop system is established.

Below, it is assumed that the initial system cycle multipliers are not exactly known, we know only know the range of their localization. Then the solution of cycles robust stabilization problem for various localization of multipliers  is given, and taking into account these general provisions the Polyak method is considered. It is worth noting that in the general case of complex multipliers, we must know precisely enough their localization regions. At the end, the applications of proposed predictive control scheme to stabilize the cycles of some common systems in Physics literature are considered.

\section{Problem statement.}

Considered is a nonlinear discrete system
\begin{equation}\label{1}
x_{n+1}=f\left(x_n\right),\ x_n\in\mathbb R^m,\ n=1,2,\dots,
\end{equation}
where $f(x)$  is a differentiable vector function of corresponding dimension.
It is assumed that the system \eqref{1} has an invariant convex set  $A$, i.e. if   $\xi\in A$, then
$f(\xi)\in A$. We emphasize that we do not assume that the set $A$ is a minimally convex set.
It is also assumed that this system has one or several unstable $T$ cycles  $\left\{\eta_1,\dots,\eta_T\right\}$, where all vectors   are different and belong to the invariant set  $A$, i.e.  $\eta_{j+1}=f\left(\eta_j\right)$,   $j=1,\dots,T-1$,  $\eta_1=f\left(\eta_T\right)$. The considered unstable cycles multipliers are determined as eigenvalues of the product of Jacobi  matrices $\prod\limits_{j=1}^Tf'\left(\eta_{T-j+1}\right)$  of dimensions $m\times m$   at the cycle points. The matrix  $\prod\limits_{j=1}^Tf'\left(\eta_{T-j+1}\right)$ is called the Jacobi  matrix of the cycle $\left\{\eta_1,\dots,\eta_T\right\}.$ Typically, a priori the cycles of the system \eqref{1} are not  known. Consequently, the spectrum $\left\{\mu_1,\dots,\mu_m\right\}$ of the matrix $\prod\limits_{j=1}^Tf'\left(\eta_{T-j+1}\right)$  is unknown as well. The spectrum elements are called cycle multipliers. Below, we assume that some estimates on the localization set $M$ for the cycle multipliers  are known.

Let us consider the control system
\begin{equation}\label{2}
x_{n+1}=F\left(x_n\right),
\end{equation}
where  $F(x)=\sum\limits_{j=1}^N\theta_jf^{((j-1)T+1)}(x)$,   $f^{(1)}(x)=f(x)$,   $f^{(k)}(x)=f\left(f^{(k-1)}(x)\right)$,  $k=2,\dots,T$. The numbers $\theta_1,\dots,\theta_N$  are real. It can be easily verified that  at $\sum\limits_{j=1}^N\theta_j=1$  the system \eqref{2} also includes the cycle   $\left\{\eta_1,\dots,\eta_T\right\}$. We aim to choose such parameter $N$  and coefficients
$\left\{\theta_1,\dots\theta_N\right\}$ so that the system \eqref{2} cycle $\left\{\eta_1,\dots,\eta_T\right\}$  would be locally asymptotically stable. Naturally, when constructing these coefficients, there will be used information on set $M$ of multipliers localization. It is also desirable \cite{16,17} to fulfill an additional condition: the system \eqref{1} invariant convex set   $A$ must be also invariant for the system \eqref{2}. This requirement will be fulfilled, for example, if  $0\le\theta_j\le1$, $j=1,\dots,N$.

Polyak method \cite{15} utilizes the case  $\theta_1=1$,  $\theta_2=\dots=\theta_{N-2}=0$, $\theta_{N-1}=-\theta_N=\varepsilon$. Regarding the set  $M$, it was assumed that
$M={\mathbb D}\cup\left\{\mu^*\right\}$ where ${\mathbb D}=\{z:\,|z|<1\}$ is the central unit circle on the complex plane, and   $\mu^*$
is a known real number. In the case $m=1$  the required coefficient formula has the form
$\displaystyle\varepsilon=
\frac{1\mp\left(\left|\mu^*\right|/\rho\right)^{-\frac1T}}{\left(\mu^*\right)^{N-2}\left(\mu^*-1\right)}$
where $0<\rho<1$.
In this article the control problem is solved for a wider class of multipliers localization set  $M$.

\section{Constructing the Jacobi matrix for a controlled system}

Investigating stability of $T$ cycles of the system \eqref{2}  consists in constructing of Jacobi matrix $\prod\limits_{j=1}^TF'\left(\eta_{T-j+1}\right)$ of that cycle and studying the eigenvalues of this matrix. To derive the Jacobi matrix, we use the ideas from \cite{15}.

Let  $J_j=f'\left(\eta_j\right)$, $j=1,\dots,T$,  then we write the Jacobi matrix of the system \eqref{1} cycle
$\left\{\eta_1,\dots,\eta_T\right\}$
 as $J=J_T\cdot\ldots\cdot J_1$.
We introduce the following auxiliary matrices:
$$
A_1=I,\ A_2=J_1,\ A_3=J_2\cdot J_1,\ \ldots,\ A_{T-1}=J_{T-1}\cdot\ldots\cdot J_1
$$
$(I\,-\, \text{unity matrix of order }\ m\times m)$;
$$
B_1=J_T\cdot\ldots\cdot J_1=J,\ B_2=J_T\cdot\ldots\cdot J_2,\ \ldots, \ B_T=J_T;
$$
then  $B_kA_k=J$,  $k=1,\ldots,T$,  $A_kB_k=\left(J_{k-1}\cdot\ldots\cdot J_1\right)\cdot\left(J_T\cdot\ldots\cdot J_k\right)$ and, consequently,
$\left(A_kB_k\right)^s=A_kJ^{s-1}B_k$,  $s=1,2,\ldots$.

By chain rule:
\begin{equation*}
\left.\left(f^{(s)}(x)\right)'\right|_{x=\eta_i}=
\left.\left(f^{(s-1)}(x)\right)'\right|_{x=\eta_{i+1}}\cdot\left.\left(f(x)\right)'\right|_{x=\eta_i}=
\left.\left(f^{(s-1)}(x)\right)'\right|_{x=\eta_{i+1}}\cdot J_i,
\end{equation*}
we get  $$\left.\left(f^{((j-1)T)}(x)\right)'\right|_{x=\eta_{i}}=A_iJ^{j-2}B_i,\ j=2,\ldots,N$$ and therefore
\begin{equation*}
\left.\left(f^{((j-1)T+1)}(x)\right)'\right|_{x=\eta_{i}}=J_iA_iJ^{j-2}B_i,\ j=2,\ldots,N.
\end{equation*}

Next we find  that $$F'\left(\eta_i\right)=\sum\limits_{j=1}^N\theta_j\left.\left(f^{((j-1)T+1)}(x)\right)'\right|_{x=\eta_i}=
 \theta_1J_i+\sum\limits_{j=2}^N\theta_jJ_iA_iJ^{j-2}B_i.$$
  For the Jacobi matrix of the system \eqref{2} cycle
 $\left\{\eta_1,\ldots,\eta_T\right\}$   we can write:
 \begin{multline*}
 F'\left(\eta_T\right)\cdot\ldots\cdot F'\left(\eta_1\right)=
 J_T\left(\theta_1I+A_T\left(\sum\limits_{j=2}^N\theta_jJ^{j-2}\right)B_T\right)\cdot\\
 J_{T-1}\left(\theta_1I+A_{T-1}\left(\sum\limits_{j=2}^N\theta_jJ^{j-2}\right)B_{T-1}\right)\cdot\ldots\cdot
J_1\left(\theta_1I+A_1\left(\sum\limits_{j=2}^N\theta_jJ^{j-2}\right)B_1\right)
 \end{multline*}
 Taking into account that
\begin{equation*}
J_k\left(\theta_1I+A_k\left(\sum\limits_{j=2}^N\theta_jJ^{j-2}\right)B_k\right)=
\end{equation*}
$$
J_kA_k\left(\theta_1I+\left(\sum\limits_{j=2}^N\theta_jJ^{j-2}\right)B_kA_k\right)A_k^{-1}=
J_kA_k\left(\sum\limits_{j=1}^N\theta_jJ^{j-1}\right)A_k^{-1}
$$
and   $J_kA_k=A_{k+1}$
it follows that
\begin{multline*}
F'\left(\eta_T\right)\cdot\ldots\cdot F'\left(\eta_1\right)=\\
=
J_TA_T\left(\sum\limits_{j=1}^N\theta_jJ^{j-1}\right)A_T^{-1}\cdot J_{T-1}A_{T-1}\left(\sum\limits_{j=1}^N\theta_jJ^{j-1}\right)A_{T-1}^{-1}\cdot\ldots\cdot\\
\cdot J_1A_1\left(\sum\limits_{j=1}^N\theta_jJ^{j-1}\right)A_1^{-1}=
J\left(\sum\limits_{j=1}^N\theta_jJ^{j-1}\right)^T
\end{multline*}

(The reader is gently advised that the superscript $T$ in the formula above and all subsequent formulas denotes power, not transpose.)

For deriving the Jacobian formula above it was assumed that the matrix $J$ was not degenerated.
This limitation can be easily circumvented using a well-known topological technique: considering the matrix $J+\delta I$ instead of the degenerated matrix $J$
 and after all calculations taking the limit as $\delta\rightarrow 0$. Thus, the following result is obtained.

\begin{lemma}\label{lem1}
The Jacobi matrix of the cycle $\left\{\eta_1,\ldots,\eta_T\right\}$ in the system \eqref{2}  can be represented as
\begin{equation}\label{3}
J\left(\sum\limits_{j=1}^N\theta_jJ^{j-1}\right)^T,
\end{equation}
where $J$  is the Jacobi matrix of the  cycle $\left\{\eta_1,\ldots,\eta_T\right\}$ in the system \eqref{1}.
\end{lemma}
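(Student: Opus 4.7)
The plan is to compute $F'(\eta_i)$ for each cycle point $i$ explicitly, assemble the product $F'(\eta_T)\cdots F'(\eta_1)$, and then telescope the resulting expression using the algebraic identities satisfied by the auxiliary matrices $A_k$ and $B_k$. First, I would apply the chain rule iteratively to the composition $f^{((j-1)T+1)}$ evaluated at $\eta_i$. Because the cycle closes up after $T$ applications, each full block of $T$ differentiations contributes exactly the cycle Jacobian $J$, but the ordering is shifted depending on where we start, which is precisely what the matrices $A_i$ and $B_i$ are designed to record. This gives the key identity $\left.(f^{((j-1)T+1)}(x))'\right|_{x=\eta_i} = J_i A_i J^{j-2} B_i$ for $j\ge 2$, with the $j=1$ term yielding simply $J_i$.

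Next, I would substitute this into $F'(\eta_i) = \theta_1 J_i + \sum_{j=2}^N \theta_j J_i A_i J^{j-2} B_i$ and form the product over $i = T, T-1, \dots, 1$. The core computational step is to recognize, using $B_k A_k = J$, that the parenthesized sum factors as
\begin{equation*}
\theta_1 I + A_k\Bigl(\sum_{j=2}^N \theta_j J^{j-2}\Bigr) B_k = A_k \Bigl(\sum_{j=1}^N \theta_j J^{j-1}\Bigr) A_k^{-1},
\end{equation*}
so that each factor $J_k$-block becomes $A_{k+1}\bigl(\sum_j \theta_j J^{j-1}\bigr) A_k^{-1}$ via $J_k A_k = A_{k+1}$. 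When the $T$ such factors are multiplied, the inner $A_k^{-1}$ cancels the next $A_k$, producing a clean telescoping, and the leftover boundary terms reassemble into $J$ times $\bigl(\sum_j \theta_j J^{j-1}\bigr)^T$.

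The one genuine obstacle is that the factorization above requires $A_k$ to be invertible for each $k$, which in turn requires $J$ to be nondegenerate (since $A_k$ is a product of $J_i$'s and $J = J_T\cdots J_1$). I would address this by the standard perturbation trick suggested in the text: replace $J$ with $J + \delta I$ for small $\delta>0$, carry out the whole computation in the invertible regime, and pass to the limit $\delta \to 0$. Since both sides of the proposed identity \eqref{3} are continuous in the matrix entries, the identity persists in the limit, yielding the result without the invertibility hypothesis. Everything else is routine algebra, so I do not anticipate additional obstacles beyond carefully keeping track of the indices in the telescoping product.
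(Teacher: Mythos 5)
Your proposal matches the paper's own argument essentially step for step: the chain-rule identity $\left.(f^{((j-1)T+1)})'\right|_{x=\eta_i}=J_iA_iJ^{j-2}B_i$, the factorization $\theta_1I+A_k\bigl(\sum_{j=2}^N\theta_jJ^{j-2}\bigr)B_k=A_k\bigl(\sum_{j=1}^N\theta_jJ^{j-1}\bigr)A_k^{-1}$ via $B_kA_k=J$, the telescoping through $J_kA_k=A_{k+1}$, and the perturbation $J+\delta I$ with a continuity passage to handle the degenerate case. The proof is correct and takes the same route as the paper.
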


We now consider another control system, instead of system \eqref{2}:
\begin{equation}\label{4}
x_{n+1}=f\left(\theta_1x_n+\sum\limits_{j=2}^N\theta_jf^{((j-1)T)}\left(x_n\right)\right).
\end{equation}

When $\sum\limits_{j=1}^N\theta_j=1$ then the system \eqref{4} preserves  the cycle   $\left\{\eta_1,\ldots,\eta_T\right\}.$
In addition, according to formula \eqref{3}, the Jacobi matrix of the system \eqref{4} cycle is expressed in the terms of Jacobi matrix of the system \eqref{1}.
The advantage of the control system \eqref{4} over the system \eqref{2} consists of a fewer calculation of the values for function $f(x)$  (more precisely, the difference is $N-2$).

\section{Main result}

All results presented in this section are formulated for system \eqref{2}, however they hold without change for system \eqref{4}.

\begin{theorem}\label{theo1}
Suppose $f\in C^1$ and that the system \eqref{1} has an unstable $T$ cycle with multipliers
$\left\{\mu_1,\ldots,\mu_m\right\}$. Then this cycle will be a locally asymptotically stable cycle of the system \eqref{2} if
\begin{equation*}
\mu_j\left[r\left(\mu_j\right)\right]^T\in {\mathbb D},\quad j=1,\ldots,m,
\end{equation*}
where 
$r(\mu)=\sum\limits_{j=1}^N\theta_j\mu^{j-1}$.
\end{theorem}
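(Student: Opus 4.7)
The plan is to reduce the statement to the classical spectral stability criterion applied to the $T$-th iterate of the closed-loop map $F$. First I would recall the standard fact: for a $C^1$ map, a $T$-cycle $\{\eta_1,\ldots,\eta_T\}$ of system \eqref{2} is locally asymptotically stable if and only if each $\eta_i$ is a locally asymptotically stable fixed point of $F^{(T)}$, which in turn holds when the spectral radius of the Jacobi matrix of the cycle, $\prod_{j=1}^T F'(\eta_{T-j+1})$, is strictly less than $1$. This is the standard linearization criterion for hyperbolic fixed points of $C^1$ discrete maps, and it is the only "external" ingredient I need beyond what the excerpt already supplies.

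Next I would invoke Lemma~\ref{lem1}, which identifies this Jacobi matrix with $J\, r(J)^T$, where $J$ is the Jacobi matrix of the cycle in the open-loop system \eqref{1} and $r(\mu)=\sum_{j=1}^N\theta_j\mu^{j-1}$. With this representation in hand, the theorem reduces to computing the spectrum of $J\, r(J)^T$ in terms of the spectrum of $J$, i.e.\ to a polynomial spectral mapping argument.

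For the spectral mapping step I would apply Schur's theorem to write $S^{-1} J S = U$ with $U$ upper triangular and diagonal entries $\mu_1,\ldots,\mu_m$. Since $r(J)$ and its $T$-th power are polynomials in $J$, the conjugation by $S$ commutes with forming them, so $S^{-1}\bigl(J\, r(J)^T\bigr) S = U\, r(U)^T$ is upper triangular with diagonal entries $\mu_j\,[r(\mu_j)]^T$. Thus the eigenvalues of the cycle's Jacobi matrix in system \eqref{2} are exactly $\mu_j\,[r(\mu_j)]^T$, $j=1,\ldots,m$. This bypasses any assumption that $J$ be diagonalizable and avoids the non-degeneracy issue already handled by the $J+\delta I$ perturbation in the derivation preceding Lemma~\ref{lem1}.

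Finally, combining the two ingredients, the spectral-radius condition $\rho(J\, r(J)^T)<1$ is equivalent to $|\mu_j\,[r(\mu_j)]^T|<1$ for all $j$, i.e.\ $\mu_j\,[r(\mu_j)]^T\in\mathbb{D}$, which is precisely the hypothesis. I do not anticipate a real obstacle: the argument is essentially an assembly of Lemma~\ref{lem1} with two standard facts. The only point worth stating with a little care is the polynomial functional calculus step, which is why I would flag Schur triangularization explicitly rather than tacitly assuming diagonalizability.
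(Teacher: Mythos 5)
Your proposal is correct and follows essentially the same route as the paper: both rest on Lemma~\ref{lem1} together with a polynomial spectral mapping argument showing that the eigenvalues of $J\left[r(J)\right]^T$ are exactly $\mu_j\left[r\left(\mu_j\right)\right]^T$, and then conclude by the standard linearization criterion. The only cosmetic difference is that you triangularize $J$ via Schur whereas the paper reduces $J$ to Jordan form; both devices serve the identical purpose of handling a possibly non-diagonalizable $J$.
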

\begin{proof} According to Lemma~\ref{lem1}, the characteristic polynomial for a system of linear approximation in the cycle neighbourhood in
the case of system \eqref{2} can be written as $\varphi(\lambda)=\text{det}\,\left(\lambda I-J[r(J)]^T\right)$.
 By reducing the matrix $J$ to the Jordan form, this characteristic polynomial can be represented as
 $\varphi(\lambda)=\prod\limits_{j=1}^m\left(\lambda I-\mu_j\left[r\left(\mu_j\right)\right]^T\right)$,
 from where the theorem conclusion follows.
\end{proof}

Note that the condition $r(1)=1$  is obligatory.
If, additionally, $\theta_j\in[0,1]$ for  $j=1,\ldots,N$, then $\mu_j\left[r\left(\mu_j\right)\right]^T\in\overline{{\mathbb D}}$
when $\mu_j\in\overline{{\mathbb D}}$, and hence
$\left|\mu_j\left[r\left(\mu_j\right)\right]^T\right|<\left|\mu_j\right|^{1+T}$.
This means that if some multiplier of the system \eqref{1} cycle lies in the unit circle, the corresponding multiplier of the system \eqref{2} will lie closer to zero. Thus for the closed loop system, the stabilization quality is improving.
Various estimates for multipliers allow us to construct control systems that stabilize cycles.

\subsection{Case $M=\left\{\mu_1,\dots,\mu_m\right\}$}

If the multipliers are exactly known, we can choose $N=m+1$  and the coefficients
$\left\{\theta_1,\ldots,\theta_{m+1}\right\}$ from the condition
$r(\mu)=\sum\limits_{j=1}^{m+1}\theta_j\mu^{j-1}=\frac1{\prod\limits_{k=1}^m\left(1-\mu_k\right)}\prod\limits_{k=1}^m\left(\mu-\mu_k\right)$.
Then from Theorem \ref{theo1} we get the following conclusion.

\textit{Conclusion.} Suppose that $f\in C^1$  and the system \eqref{1} has an unstable $T$ cycle with multipliers
$\left\{\mu_1,\ldots,\mu_m\right\}$, and the coefficients
$\theta_1,\dots,\theta_{m+1}$ are found as exposed above.
Then this cycle will be a locally asymptotically stable cycle of system \eqref{2}.
Moreover, if the initial point belongs to the cycle basin of attraction, the convergence to the cycle is superlinear.
	
	The superlinearity of the convergence rate follows from the fact that all
multipliers of system \eqref{2} $\left\{\eta_1,\ldots,\eta_T\right\}$  cycle turn out to be zero.

 Note that the authors are unaware about any other method that allow to stabilize a cycle by knowing the cycle multipliers only. Unfortunately, in a typical situation the multipliers are either unknown. The best we can expect is to localize them approximately. What to do in that case is considered in the next section.

\subsection{Case $M=\{z:\, {\rm Re}\,z\le0\}\cup {\mathbb D}$}

\begin{theorem}\label{theo2}
Suppose $f\in C^1$ and that system \eqref{1} has an unstable $T$-cycle with multipliers
$\left\{\mu_1,\ldots,\mu_m\right\}$  satisfying the conditions:
\begin{equation*}
\left|\mu_j-\widehat{\mu}_j\right|<\delta_j,\ \text{Re}\,\left\{\mu_j\right\}\le0,\ j=1,\ldots,n_1,\ \left|\mu_j\right|<1,\ j=n_1+1,\ldots,m.
\end{equation*}
Let the coefficients  $\theta_j$,   $j=1,\ldots,N$, of the system \eqref{2} be determined from the condition
\begin{equation*}
\sum\limits_{j=1}^{n_1+1}\theta_j\mu^{j-1}=
\frac1{\prod\limits_{k=1}^{n_1}\left(1-\widehat{\mu}_k\right)}\prod\limits_{k=1}^{n_1}\left(\mu-\widehat{\mu}_k\right)
\quad(\text{here\ } N=n_1+1).
\end{equation*}
	Then, for sufficiently small values  $\delta_j$,  $j=1,\ldots,n_1$,  the $T$-cycle will be a locally asymptotically stable cycle of system \eqref{2}.
\end{theorem}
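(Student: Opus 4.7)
The strategy is to invoke Theorem~\ref{theo1} and verify, for each $j=1,\ldots,m$, the inequality $|\mu_j[r(\mu_j)]^T|<1$, where by construction
\[
r(\mu)=\sum_{j=1}^{n_1+1}\theta_j\mu^{j-1}=\frac{\prod_{k=1}^{n_1}(\mu-\widehat{\mu}_k)}{\prod_{k=1}^{n_1}(1-\widehat{\mu}_k)}.
\]
Notice that $r(1)=1$ and that $\widehat{\mu}_1,\ldots,\widehat{\mu}_{n_1}$ are precisely the zeros of $r$. I would split the $m$ multipliers into the two groups appearing in the hypothesis and handle each group separately.

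For $j\le n_1$, the point $\widehat{\mu}_j$ is a zero of $r$, so I factor $r(\mu)=(\mu-\widehat{\mu}_j)q_j(\mu)$ with $q_j$ a fixed polynomial depending only on $\{\widehat{\mu}_k\}$. Using $|\mu_j-\widehat{\mu}_j|<\delta_j$ and the uniform bound $|\mu_j|\le|\widehat{\mu}_j|+1$,
\[
\bigl|\mu_j[r(\mu_j)]^T\bigr|\le (|\widehat{\mu}_j|+1)\,C_j^T\,\delta_j^T,
\]
where $C_j=\sup_{|\mu-\widehat{\mu}_j|\le 1}|q_j(\mu)|$ is independent of $\delta_j$. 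Hence the right-hand side tends to $0$ as $\delta_j\to 0$, and the required inequality holds for every multiplier in the first group once the $\delta_j$ are sufficiently small.

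For $j>n_1$, we have the strict bound $|\mu_j|<1$ while $\mu_j$ is independent of $\delta_j$, so one cannot exploit the perturbation parameter. Instead I would prove the auxiliary estimate $|r(\mu)|\le 1$ throughout the closed unit disk; combined with $|\mu_j|<1$, this immediately yields $|\mu_j[r(\mu_j)]^T|<1$. To establish $|r|\le 1$, I argue factor by factor. A real root $\widehat{\mu}_k\le 0$ contributes a factor $(\mu-\widehat{\mu}_k)/(1-\widehat{\mu}_k)$ of modulus at most $(|\mu|+|\widehat{\mu}_k|)/(1+|\widehat{\mu}_k|)\le 1$ on $|\mu|\le 1$. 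Complex roots come in conjugate pairs $\widehat{\mu}_k,\overline{\widehat{\mu}_k}$ (a structure inherited from multipliers of a real system), which I would combine into the real quadratic factor $p_k(\mu)=(\mu-\widehat{\mu}_k)(\mu-\overline{\widehat{\mu}_k})/|1-\widehat{\mu}_k|^2$. A direct computation of $|p_k(e^{i\theta})|^2$ using $\text{Re}\,\widehat{\mu}_k\le 0$ shows that the maximum on $|\mu|=1$ is attained at $\theta=0$, where it equals $1$; the maximum modulus principle then gives $|p_k|\le 1$ on the closed disk, and multiplying the factor bounds yields $|r|\le 1$ there.

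The technical heart, and the step I expect to be the main obstacle, is the conjugate-pair computation: it is precisely where the condition $\text{Re}\,\widehat{\mu}_k\le 0$ is really used, and without the conjugate-pair structure one can exhibit roots in the left half plane for which $|r|>1$ somewhere in the unit disk. Once this inequality is in hand, the rest is bookkeeping: feed both groups of multipliers into Theorem~\ref{theo1} to conclude local asymptotic stability.
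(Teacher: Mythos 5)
Your proposal is correct, and it reaches the conclusion by a partly different route than the paper. For the multipliers with $\text{Re}\,\mu_j\le 0$ your argument is essentially the paper's: the paper sets $\delta_j=0$, notes the corresponding eigenvalues of the controlled cycle are exactly zero, and appeals to Rouch\'e/continuity for small $\delta_j$; your bound $(|\widehat{\mu}_j|+1)\,C_j^T\delta_j^T$ is just an explicit, quantitative version of that step. The genuine divergence is in how you prove $|r(\mu)|\le 1$ on $\overline{\mathbb D}$, which handles the multipliers already inside the unit disk. The paper derives it from coefficient positivity: since the $\widehat{\mu}_k$ lie in the closed left half-plane, the monic polynomial $\prod_k(\mu-\widehat{\mu}_k)$ is a product of factors $\mu+c$ ($c\ge 0$) and $\mu^2-2a\mu+\rho^2$ ($a\le 0$), hence has nonnegative coefficients, so $\theta_j\ge 0$, $\sum_j\theta_j=r(1)=1$, and $|r(\mu)|\le\sum_j\theta_j|\mu|^{j-1}\le 1$ for $|\mu|\le 1$. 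You instead estimate factor by factor on the unit circle (real roots directly, conjugate pairs via the boundary computation showing the maximum of the normalized quadratic is $1$ at $\mu=1$) and invoke the maximum modulus principle. Both are valid; the paper's route is shorter and immediately yields its side remark that multipliers in the disk are not pushed outward, while yours is more self-contained and isolates exactly where $\text{Re}\,\widehat{\mu}_k\le 0$ and the conjugate-pair structure (needed anyway for the $\theta_j$ to be real) are used. Two shared caveats, not defects of your argument relative to the paper's: the hypothesis constrains $\text{Re}\,\mu_j$ rather than $\text{Re}\,\widehat{\mu}_j$, so one must additionally take the estimates $\widehat{\mu}_j$ in the closed left half-plane; and the paper's intermediate inequality $\left|\mu\left[r(\mu)\right]^T\right|<|\mu|^{1+T}$ is a slip (it fails near $\mu=0$ since $r(0)=\theta_1\ne 0$), but the weaker bound $\left|\mu\left[r(\mu)\right]^T\right|\le|\mu|<1$ that both proofs actually need is unaffected.
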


\begin{proof}
 Since $\text{Re}\,\left\{\mu_j\right\}<0$,  $j=1,\ldots,n_1$, then all coefficients $\theta_j>0$,  $j=1,\ldots,n_1$.
 That means that $\left|\mu\left[r\left(\mu\right)\right]^T\right|<|\mu|^{1+T}$ when $|\mu|<1$, i.e. the eigenvalues of the  Jacobi matrix of system \eqref{2}
  cycle corresponding to multipliers $\mu_j$,  $j=n_1+1,\ldots,m$, are smaller than the multipliers absolute values.
  Let  $\delta_j=0$,  $j=1,\ldots,n_1$, then the eigenvalues corresponding to multipliers
  $\mu_j$,  $j=1,\ldots,n_1$, are equal to zero. When  $\delta_j$,  $j=1,\ldots,n_1$, are sufficiently small in magnitude,
  these eigenvalues will lie in the central unit circle, as follows from Rouche theorem.
  Thus, all eigenvalues will less than 1 in absolute value, which means local asymptotic stability.\end{proof}

\subsection{Case $M={\mathbb D}\cup\left\{\mu^*\right\},\ \left|\mu^*\right|>1$\ \cite{15}}

In \cite{15}, the coefficients $\theta_1,\ldots,\theta_N$ were chosen as $\theta_1=1$,  $\theta_2=\ldots=\theta_{N-2}=0$,
$\theta_{N-1}=-\theta_N=\varepsilon$, where
$\displaystyle\varepsilon=\frac{1\mp\left(\left|\mu^*\right|/\rho\right)^{-\frac1T}}{\left(\mu^*\right)^{N-2}\left(\mu^*-1\right)}$,  $0<\rho<1$.
Such a choice ensures that the multipliers belong to the open central unit interval corresponding to $\mu^*$.
However, the condition $\left|\mu\left[r(\mu)\right]^T\right|<1$ with $|\mu|<1$ is not necessarily satisfied. Nevertheless, the value $\varepsilon$
 can be made arbitrarily small by choosing the number $N$ large.
 And then, from Rouche's theorem, it follows that with a sufficiently large $N$
   the other multipliers will remain within the central unit circle.
   This ensures the local asymptotic stability of the system \eqref{2} cycle.

When it is known that $\mu^*<-1$, the control scheme can be simplified, namely:
\begin{equation}\label{5}
x_{n+1}=\theta_1f\left(x_n\right)+\theta_2f^{(T+1)}\left(x_n\right),
\end{equation}
where   $\displaystyle\theta_1=\frac{\left|\mu^*\right|}{1+\left|\mu^*\right|}$,   $\displaystyle\theta_2=\frac1{1+\left|\mu^*\right|}$.

\subsection{Case $M={\mathbb D}\cup\left\{\mu^*,\overline{\mu}^*\right\},\ \left|\mu^*\right|>1$}

The case of general localization of multipliers $\left\{\mu_1,\ldots,\mu_m\right\}$ for the system \eqref{1} cycle was considered in \cite{15} but only for $T=1$.
In that case, the coefficients $\theta_1,\ldots,\theta_N$ were no longer scalars
but matrices and, as before, were chosen as $\theta_1=I$,  $\theta_2=\ldots=\theta_{N-2}=0$,  $\theta_{N-1}=-\theta_N=\varepsilon$, where $I$ is identity matrix, $0$  is zero matrix, $\varepsilon=S\Lambda S^{-1}$,  $\Lambda=\text{diag}\,\left\{\varepsilon_1,\ldots,\varepsilon_m\right\}$,
$\displaystyle\varepsilon_j=\frac{1+{\rm e}^{\imath \varphi}\left(\rho/\left|\mu_j\right|\right)}{\left(\mu_j\right)^{N-2}\left(\mu_j-1\right)}$,
if $\left|\mu_j\right|<1$, and $\varepsilon_j=0$, if $\left|\mu_j\right|>1$, and $0<\rho<1$, $\varphi\in\{0,\pi\}$ if $\mu_j$  as a real number.
The matrix $S$  consists of the eigenvectors of the Jacobi matrix $J$  for equilibrium point.
Thus, to apply the stabilization method, it is necessary to know not only all the multipliers of the equilibrium,
but also the Jacobi matrix itself. That is impossible when the equilibrium   is not known.

Now, let us we apply the scheme \eqref{2}. Let $\mu^*=\rho{\rm e}^{\imath\varphi}$. Then
\begin{equation*}
r(\mu)=\frac{\left(\mu-\mu^*\right)\left(\mu-\overline{\mu}^*\right)}{\left(1-\mu^*\right)\left(1-\overline{\mu}^*\right)}
=\end{equation*}
\begin{equation*}
\frac{\rho^2}{\rho^2-2\rho\cos\varphi+1}+
\frac{-2\rho\cos\varphi}{\rho^2-2\rho\cos\varphi+1}\mu+
\frac1{\rho^2-2\rho\cos\varphi+1}\mu^2.
\end{equation*}

If the complex number $\mu^*$ lies in the left half-plane, then the coefficients of polynomial $r(\mu)$ are positive,
so $\left|\mu\left[r(\mu)\right]^T\right|<|\mu|^{1+T}$ with $|\mu|<1$. Therefore, each multiplier of system \eqref{2}
cycle lying in the central unit circle turns out to be in absolute value less then the multiplier of the corresponding cycle of the system \eqref{1}.
Also, the multipliers corresponding to $\mu^*$ and $\overline{\mu}^*$ change to zero. If the multipliers $\mu^*$ and $\overline{\mu}^*$
  are not exactly known, but they can be well estimated, then for the coefficients $\theta_1$,  $\theta_2$,  $\theta_3$,
  although different from the calculated ones, the values of the polynomial with these coefficients at points $\mu^*$ and $\overline{\mu}^*$
   will not exceed 1 in absolute value, as follows from the Rouche theorem. The desired control system is
\begin{equation*}
x_{n+1}=\theta_1f\left(x_n\right)+\theta_2f^{(T+1)}\left(x_n\right)+\theta_3f^{(2T+1)}\left(x_n\right),
\end{equation*}
where   $\displaystyle\theta_1=\frac{\rho^2}{\rho^2-2\rho\cos\varphi+1}$,
$\displaystyle\theta_2=\frac{-2\rho\cos\varphi}{\rho^2-2\rho\cos\varphi+1}$,
$\displaystyle\theta_3=\frac1{\rho^2-2\rho\cos\varphi+1}$.

\subsection{Case $T=1,\ M=\lfloor-\mu^*,\mu^*\rfloor,\ M=\lfloor-\mu^*,1\rfloor$}

Suppose  $M=\lfloor-\mu^*,\mu^*\rfloor$. From Theorem \ref{theo1} it follows that in order to stabilize the equilibrium, it would be necessary to construct a polynomial
$\mu r(\mu)$, so that $r(1)=1$  and $|\mu r(\mu)|\le1$ for all $|\mu|<\mu^*$.

\begin{theorem}\label{theo3}
Let $f\in C^1$ and the system \eqref{1} has unstable equilibrium with multipliers
$\left\{\mu_1,\ldots,\mu_m\right\}\subset\left[-\mu^*,\mu^*\right]$.
Let the value $N$  be odd and be chosen from the condition
$\displaystyle \csc\frac\pi{2N}>\mu^*$, and the coefficients $\theta_1,\ldots,\theta_N$  from the condition
\begin{equation*}
\mu r(\mu)=\mu\sum\limits_{j=1}^N\theta_j\mu^{j-1}=
(-1)^{\frac{N-1}2}T_N\left(\mu
\sin\frac\pi{2N}\right),
\end{equation*}
where $T_N(x)$  is the first kind Chebyshev polynomial of odd order $N$.
Then this equilibrium   will be a locally asymptotically stable equilibrium   of system \eqref{2} (modulo a finite number of cases when
$\displaystyle\mu_j=\frac{\cos\pi k/N}{\sin\pi/2N}$ for some $k=1,\ldots,N-1$).
\end{theorem}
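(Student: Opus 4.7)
The plan is to apply Theorem~\ref{theo1}. For an equilibrium ($T=1$) the stabilization criterion reduces to $|\mu_j\,r(\mu_j)|<1$ for each $j=1,\ldots,m$, together with the normalization $r(1)=1$ that keeps the equilibrium fixed in the closed loop. So it suffices to exhibit a polynomial $r$ of degree $N-1$ with $r(1)=1$ for which $|\mu\,r(\mu)|<1$ on $[-\mu^*,\mu^*]$ outside a finite exceptional set. This is essentially the classical Chebyshev problem of minimizing $\max_{|\mu|\le\mu^*}|\mu\,p(\mu)|$ over polynomials $p$ of degree $N-1$ normalized by $p(1)=1$, so a scaled odd Chebyshev polynomial is the natural candidate.

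I would first check that the formula in the statement actually defines a legitimate $r$: since $T_N$ is an odd polynomial for odd $N$, $T_N(\mu\sin(\pi/(2N)))$ is divisible by $\mu$, so the quotient $r(\mu)$ is a polynomial of degree $N-1$ (only the coefficients $\theta_1,\theta_3,\ldots,\theta_N$ can be nonzero). Next I would verify the normalization $r(1)=1$ using $T_N(\cos\theta)=\cos(N\theta)$ with $\cos\theta=\sin(\pi/(2N))$, i.e.\ $\theta=(N-1)\pi/(2N)$: then $N\theta=((N-1)/2)\pi$, so $T_N(\sin(\pi/(2N)))=(-1)^{(N-1)/2}$, which cancels the prefactor.

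It then remains to verify the key inequality. The hypothesis $\csc(\pi/(2N))>\mu^*$ ensures $\mu\sin(\pi/(2N))\in(-1,1)$ for every $\mu\in[-\mu^*,\mu^*]$, whence $|\mu\,r(\mu)|=|T_N(\mu\sin(\pi/(2N)))|\le 1$ on the interval. Strict inequality fails only at the interior extrema of $T_N$, namely where $\mu\sin(\pi/(2N))=\cos(\pi k/N)$ for $k=1,\ldots,N-1$, producing precisely the exceptional values $\mu=\cos(\pi k/N)/\sin(\pi/(2N))$ listed in the statement. For any configuration of multipliers in $[-\mu^*,\mu^*]$ avoiding these finitely many points, Theorem~\ref{theo1} delivers local asymptotic stability. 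The only truly delicate step is the normalization identity, which is where the parity hypothesis on $N$ enters; everything else is a direct application of the equioscillation property of $T_N$ and the reduction supplied by Theorem~\ref{theo1}.
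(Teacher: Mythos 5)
Your proposal is correct and follows essentially the same route as the paper: the paper's proof is a one-line appeal to the boundedness $|T_N(x)|\le 1$ on $|x|\le 1$ together with the normalization $r(1)=1$, and you have simply filled in the details (oddness of $T_N$ guaranteeing that $r$ is a genuine polynomial, the computation $T_N\left(\sin\frac{\pi}{2N}\right)=(-1)^{(N-1)/2}$ giving $r(1)=1$, and the identification of the equality set with the interior extrema of $T_N$). Nothing in your argument diverges from the paper's intent, and the exceptional values you derive match those in the statement.
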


The proof follows from the properties of the first kind Chebyshev polynomials: $|\mu r(\mu)|\le1$  at $\displaystyle\left|\mu\sin\frac\pi{2N}\right|\le1$,   $r(1)=1$.Note that $\mu^*\to\infty\ (N\to\infty)$ with asymptotics  $\displaystyle\frac2\pi N$.

Now we will consider the case of  $M=\lfloor-\mu^*,1\rfloor$.

\begin{theorem}\label{theo4}
Let $f\in C^1$ and the system \eqref{1} has unstable equilibrium with multipliers
$\left\{\mu_1,\ldots,\mu_m\right\}\subset\lfloor-\mu^*,1\rfloor$.
Let the $N$ value be chosen from the condition $\cot^2\frac\pi{4N}>\mu^*$, and the coefficients $\theta_1,\ldots,\theta_N$  from the conditions
\begin{equation*}
\mu r(\mu)=\mu\sum\limits_{j=1}^N\theta_j\mu^{j-1}=
T_N\left(\mu\left(1-\cos\frac\pi{2N}\right)+\cos\frac\pi{2N}\right),
\end{equation*}
where  $T_N(x)$ is the first kind Chebyshev polynomial of order $N$.
Then this equilibrium will be locally asymptotically stable equilibrium of the system \eqref{2} (modulo a finite number of cases).
\end{theorem}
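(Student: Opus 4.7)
The plan is to mirror the Chebyshev-polynomial argument already used for Theorem \ref{theo3}, now on the asymmetric interval $[-\mu^*,1]$. Since $T=1$, Lemma \ref{lem1} together with Theorem \ref{theo1} reduces local asymptotic stability of the equilibrium in system \eqref{2} to verifying $|\mu_j\, r(\mu_j)|<1$ for each multiplier $\mu_j\in[-\mu^*,1]$. The defining identity of the theorem, $\mu r(\mu)=T_N(L(\mu))$ with the affine map $L(\mu)=\bigl(1-\cos\tfrac{\pi}{2N}\bigr)\mu+\cos\tfrac{\pi}{2N}$, is designed precisely so that $L$ carries the parameter interval $[-\mu^*,1]$ into the ``Chebyshev box'' $[-1,1]$, on which $|T_N|\le 1$. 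The normalization $r(1)=1$ that is needed to preserve the equilibrium is then automatic, because $L(1)=1$ and $T_N(1)=1$.

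The first concrete step is to check the two endpoints of $L$. At $\mu=1$ one has $L(1)=1$ directly. At $\mu=-\mu^*$ one gets $L(-\mu^*)=\cos\tfrac{\pi}{2N}-\mu^*\bigl(1-\cos\tfrac{\pi}{2N}\bigr)$, so $L(-\mu^*)\ge -1$ is equivalent to $\mu^*\le\frac{1+\cos(\pi/2N)}{1-\cos(\pi/2N)}$. Applying the half-angle identity $\frac{1+\cos\alpha}{1-\cos\alpha}=\cot^2\tfrac{\alpha}{2}$ with $\alpha=\pi/(2N)$ rewrites this as $\mu^*\le\cot^2\tfrac{\pi}{4N}$, and the hypothesis $\cot^2\tfrac{\pi}{4N}>\mu^*$ produces the strict inequality $L(-\mu^*)>-1$. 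Hence $L$ embeds $[-\mu^*,1]$ into $(-1,1]$, which gives $|\mu\, r(\mu)|=|T_N(L(\mu))|\le 1$ on all of $[-\mu^*,1]$, and in particular at every multiplier $\mu_j$.

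The remaining step is to upgrade $\le$ to the strict inequality needed by Theorem \ref{theo1}. Equality $|T_N(L(\mu))|=1$ can occur only at the extremal nodes of $T_N$, i.e.\ when $L(\mu)=\cos\tfrac{k\pi}{N}$ for some $k\in\{0,1,\ldots,N\}$. Solving this for $\mu$ produces at most $N+1$ exceptional parameter values, matching the ``finite number of cases'' caveat in the statement. For every multiplier avoiding this list one has $|\mu_j\, r(\mu_j)|<1$, so Theorem \ref{theo1} yields local asymptotic stability.

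I expect the main obstacle to be purely the trigonometric bookkeeping: identifying the correct half-angle identity that converts $\cot^2\tfrac{\pi}{4N}>\mu^*$ into $L(-\mu^*)>-1$, and then enumerating the extremal preimages under $L$ so that the exceptional set is described cleanly. Beyond that, there is no new analytic content compared to Theorem \ref{theo3}; the only substantive change is replacing the symmetric dilation $\mu\mapsto\mu\sin\tfrac{\pi}{2N}$ by the asymmetric affine map $L$, which pins the right endpoint to $1$ while pushing the left endpoint as far as the hypothesis on $\mu^*$ permits.
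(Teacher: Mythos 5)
Your proof is correct and follows essentially the same route as the paper's: map $[-\mu^*,1]$ affinely into $[-1,1]$ via $L(\mu)=\mu\left(1-\cos\frac{\pi}{2N}\right)+\cos\frac{\pi}{2N}$, use $|T_N|\le 1$ there, and convert the left-endpoint condition into $\mu^*<\cot^2\frac{\pi}{4N}$ by the half-angle identity, with the finitely many extremal nodes of $T_N$ accounting for the exceptional cases. The only detail the paper records that you omit is the observation $T_N\left(\cos\frac{\pi}{2N}\right)=0$, which guarantees that $T_N(L(\mu))$ has zero constant term, i.e.\ is divisible by $\mu$, so that the coefficients $\theta_1,\ldots,\theta_N$ are actually well defined.
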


\begin{proof} Note that $r(0)=T_N(\cos\pi/2N)=0$,  $r(1)=T_N(1)=1$.
  In addition $\left|T_N(x)\right|\le1$, at $|x|\le1$, whence
$|\mu r(\mu)|\le1$ at $\displaystyle\left|\mu\left(1-\cos\frac\pi{2N}\right)+\cos\frac\pi{2N}\right|\le1$.
The last inequality is equivalent to  $\displaystyle-\cot^2\frac\pi{4N}\le\mu\le1$, which proves the theorem.\end{proof}

Note that $\mu^*\to\infty\ (N\to\infty)$ with asymptotics $\displaystyle\frac{16}{\pi^2}N^2$.

\subsection{The general case}

Using the ideas from Theorem \ref{theo1} cases,
we can propose the following $T$-cycle stabilization scheme, for which the coefficients $\theta_j$ are not necessarily constants:

\textit{a)}
find the matrix $f'(x)$,

\textit{b)}
find the vectors  $f^{(s)}(x)$,  $s=1,\ldots,T-1$,

\textit{c)}
find the matrix
$f'\left(f^{(T-1)}(x)\right)\cdot\ldots\cdot f'(f(x))\cdot f'(x)$

\textit{d)}
find the matrix  characteristic polynomial $\sum\limits_{j=1}^{m+1}\theta_j(x)\mu^{j-1}$,

\textit{e)}
normalize the characteristic polynomial
$\displaystyle\frac1{\sum\limits_{j=1}^{m+1}\theta_j(x)}\sum\limits_{j=1}^{m+1}\theta_j(x)\mu^{j-1}$,

\textit{f)}
build the control system
\begin{equation*}
x_{n+1}=F\left(x_n\right),
\end{equation*}
where
$$F(x)=\frac1{\sum\limits_{j=1}^{m+1}\theta_j(x)}\sum\limits_{j=1}^{m+1}\theta_j(x)f^{((j-1)T+1)}(x)$$   or
$$F(x)=f\left(\frac1{\sum\limits_{j=1}^{m+1}\theta_j(x)}\left(\theta_1x+\sum\limits_{j=2}^{m+1}\theta_j(x)f^{((j-1)T)}(x)\right)\right)$$

	Let us consider how this scheme looks like in the case of a linear problem.
Let $f(x)=Ax$ where $A$ is a non-degenerate $m\times m$   matrix.
Then $\eta=0$  is a single fixed point, in the absence of any higher order cycles.
We choose $\theta_j$ from the condition
$\displaystyle\frac1{\text{det}\,(I-A)}\text{det}\,(\mu I-A)=\sum\limits_{j=1}^{m+1}\theta_j\mu^{j-1}$.
Then the control system is
$\displaystyle x_{n+1}=\frac1{\sum\limits_{j=1}^{m+1}\theta_j}\sum\limits_{j=1}^{m+1}\theta_jA^jx_n$.
By the Hamilton-Cayley theorem it follows that this system right-hand side is an identical zero.

In the general case, applying this method to stabilizing chaotic motion tending to mixing,
one can expect that after a certain number of iterations the trajectory falls into the basin of attraction for the stabilized cycle. Then the convergence to the cycle will be superlinear.

	Note that if in all the considered cases $\left|\theta_j\right|$ is  being used instead of $\theta_j$, then
it becomes possible to stabilize the system \eqref{1} cycles with multipliers lying in
$M={\mathbb D}\cup\{\mu:\,\text{Re}\,(\mu)\le0\}$.
Moreover, the convex invariant set of system \eqref{1} will remain such for system \eqref{2}.
In addition, the system \eqref{2} multipliers, corresponding to those multipliers of system \eqref{1} that lie in the unit circle, will become closer to zero.

\section{Examples}

Let us illustrate the effectiveness of the generalized predictive control method for finding periodic orbits with
several well-known examples of scalar and vector chaotic systems \cite{18}. We have experimented with various number of cycles (28, 50, 101, 1001, etc.) using Maple and Python. The results we include here are all for $T=101$ (sections \ref{ssect5.1}--\ref{ssect5.12}), with a precision of 250 decimals. More results, including the Python code to replicate all results, can be found in the Appendix. 
It is essential to note that stabilization took only a few iterations (less than 10) for the majority of the systems.

The scheme applied for the logistic and triangular mappings (examples from sections \ref{ssect5.1}--\ref{ssect5.2}) was a general scheme
\begin{equation*}
\left\{\begin{array}{ll}
x_{n+1}=\frac{\theta\left(x_n\right)}{1+\theta\left(x_n\right)}f\left(x_n\right)+\frac1{1+\theta\left(x_n\right)}f^{(T+1)}\left(x_n\right),\\
\theta\left(x_n\right)=-f'\left(f^{(T-1)}\left(x_n\right)\right)\cdot\ldots\cdot f'\left(f\left(x_n\right)\right)\cdot f'\left(x_n\right).
\end{array}
\right.
\end{equation*}
It was possible to find a large number of cycles for all considered periods $T$;
in general, different initial conditions are producing different cycles.
Numerical calculations show that with sufficiently dense initial values grid, all cycles of a given length can be found.
However, in this case it is necessary to ensure that the point $x_n$  remains within the invariant set, otherwise, as a rule it goes to infinity.
If we use $|\theta(x)|$ instead of  $\theta(x)$, the point $x_n$ will always remain in the invariant set. However, in this case we can find cycles only with multipliers from the set $M={\mathbb D}\cup\{\mu:\,\text{Re}\,(\mu)\le0\}$.


In the two-dimensional case, the scheme used was
\begin{equation}\label{6}
x_{n+1}=f\left(\frac\theta{1+\theta}x_n+\frac1{1+\theta}f^{(T)}\left(x_n\right)\right).
\end{equation} 
The value  $\theta$ should be chosen according to the condition
$$ x\left(\frac\theta{1+\theta}+\frac1{1+\theta}x\right)^T\in {\mathbb D}$$  at  $x=\mu_j^*$,
where $\mu_j^*$  are cycle multipliers $(j=1,2)$, and in general, they are unknown.
In the examples below, one of the multipliers never exceeds one in magnitude, while the second one is negative, greater than one in absolute value.

The Theorem \ref{theo2} guarantees stability conditions even if the parameter $\theta$ satisfies
$$ x\left(\frac\theta{1+\theta}+\frac1{1+\theta}x\right)^T\not=0.$$ It is enough have $\theta$ in the neighborhood of multiplier. In general Theorem \ref{theo2} does not provide the estimate on the parameters, however when one multiplier is in the unit disc and the other is real and negative an elementary trial works quite effectively.

Thus $\theta>0$, and we only have to check the compliance with the condition for the second multiplier
\begin{equation}\label{7}
\left|\mu_2^*\left(\frac\theta{1+\theta}+\frac1{1+\theta}\mu_2^*\right)^T\right|<1.
\end{equation}
Let   $\theta=|\mu_2^*|+\Delta$ and assume that $|\mu_2^*|<2^T$.
If required that $\displaystyle\left|\frac\theta{1+\theta}+\frac1{1+\theta}\mu_2^*\right|<\frac12$,
which is equivalent to  $\displaystyle-\frac13\left(1+\left|\mu_2^*\right|\right)<\Delta<1+\left|\mu_2^*\right|,$ or $\displaystyle\theta\in\left(\frac23\left|\mu_2^*\right|-\frac13,
2\left|\mu_2^*\right|+1\right)$. Now, if $\displaystyle\theta<\frac23\left|\mu_2^*\right|-\frac13$ then $\displaystyle2\theta \le 2\left|\mu_2^*\right|+1$. Therefore, choosing $\theta=2^k$ subsequently for $k=1,2,...$ we are sure that for some $k$ we get  $\displaystyle\theta\in\left(\frac23\left|\mu_2^*\right|-\frac13,
2\left|\mu_2^*\right|+1\right),$ then the condition \eqref{7} will be satisfied.
Thus, the grid for sorting parameter $\theta$ should be chosen rather coarse. This justifies the procedure we used in our examples: running \eqref{6} with small values for $\theta$ and then doubling them until obtaining required cycles. To our surprise, the procedure turns out to be quite efficient. However, this idea has been successfully used recently in adaptive interior-point methods by Lesaja and Potra \cite{lesaja:18:adaptive}.

Therefore, the scheme \eqref{6} allows finding cycles both with small multipliers
(examples from sections \ref{ssect5.3}--\ref{ssect5.8}) and with large ones (examples from sections \ref{ssect5.9}--\ref{ssect5.12}).
In general, the large value of multipliers is not the main obstacle.
More challenging is the problem of small basins of attraction for long cycles.
Therefore, it is convenient to either select a dense grid for initial values or use a sufficiently large number of iterations so that the point $x_n$
would fall into the desired  basin of attraction. 
One can achieve any acceptable accuracy in determining the cyclic point. Our subsequent experiments used precision 250 decimals.

\subsection{Logistic mapping}\label{ssect5.1}

The logistic mapping
\begin{equation}\label{8}	
x_{n+1}=hx_n\left(1-x_n\right),
\end{equation}
is, perhaps, the most popular example. Let us consider the case  $h=3.99$,  $T=101$.
\figref{ris:graph1} illustrates one of its numerous $T=101$-cycles.

\begin{figure}[H]
\centerline{\Oldincludegraphics[scale=0.8]{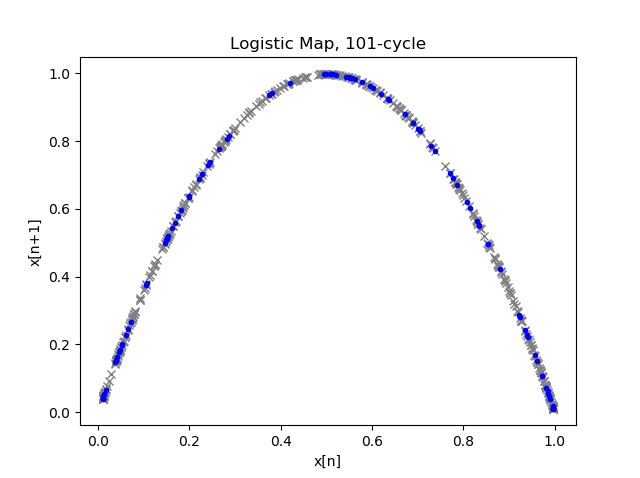}} \caption{A 101-cycle of the logistic mapping \eqref{8}.} \label{ris:graph1}
\end{figure}

\subsection{Triangular mapping}\label{ssect5.2}

Our  next example is the triangular mapping:
\begin{equation}\label{9}
x_{n+1}=h\left(1-\left|2x_n-1\right|\right),\quad h=0.99.
\end{equation}
\figref{ris:graph2} shows a $T=101$-cyclic point.
\begin{figure}[H]
\centerline{\Oldincludegraphics[scale=0.8]{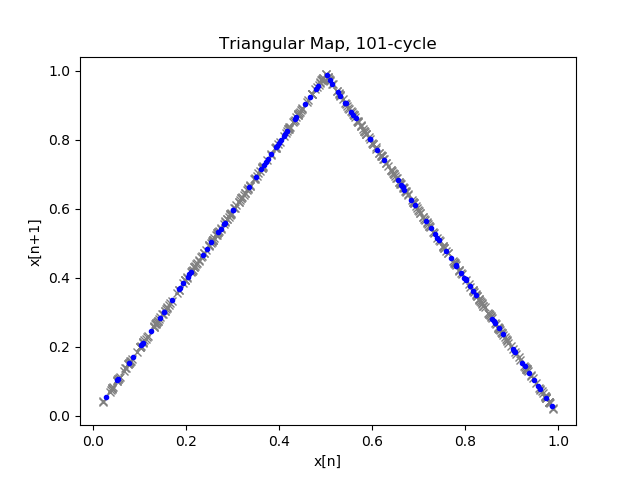}} \caption{A 101-cycle of triangular system \eqref{9}.} \label{ris:graph2}
\end{figure}

\subsection{Burgers mapping}\label{ssect5.3}
For the Burgers mapping:
\begin{equation}\label{10}
x_{n+1}=ax_n-y_n^2,\ y_{n+1}=by_n+x_ny_n,\quad a=0.75,\ b=1.75
\end{equation}
a 101-cyclic point is illustrated in \figref{fig:graph3}.
\begin{figure}[H]
\centerline{\Oldincludegraphics[scale=0.8]{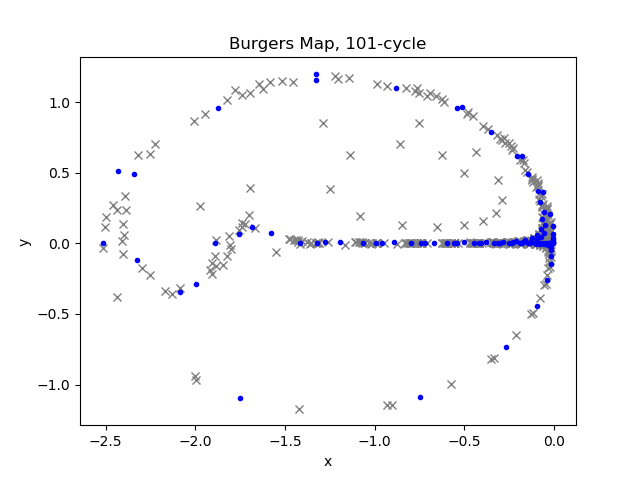}}
 \caption{A 101-cycle of the Burgers system \eqref{10}.}\label{fig:graph3}\end{figure}

\subsection{Tinkerbell mapping}\label{ssect5.4}

The Tinkerbell mapping:
\begin{equation}\label{11}
x_{n+1}=x_n^2-y_n^2+ax_n+by_n,\ y_{n+1}=2x_ny_n+cx_n+dy_n,\; a=0.9,\ b=-0.6,\ c=2.0,\ d=0.5
\end{equation}
has a 101-cyclic point illustrated in \figref{fig:graph4}.
\begin{figure}[H]
\centerline{\Oldincludegraphics[scale=0.8]{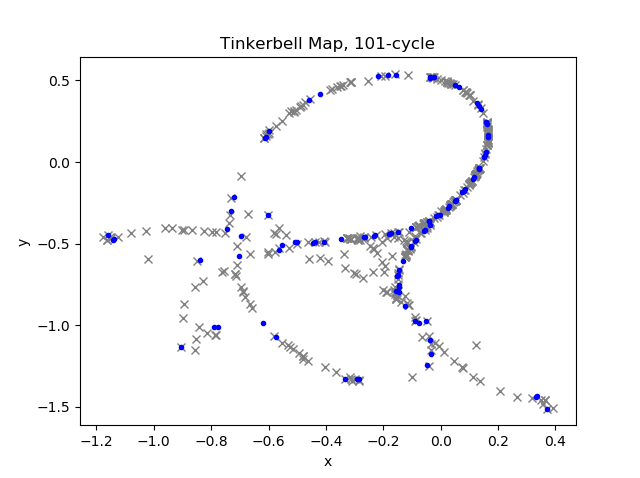}}%
 \caption{A 101-cycle of the Tinkerbell mapping \eqref{11}.}\label{fig:graph4}\end{figure}

\subsection{Gingerbredman mapping}\label{ssect5.5}
The Gingerbredman mapping:
\begin{equation}\label{12}
x_{n+1}=1+\left|x_n\right|-y_n,\ y_{n+1}=x_n
\end{equation}
has a 101-cyclic point represented in \figref{fig:graph5}.
\begin{figure}[H]
\centerline{\Oldincludegraphics[scale=0.8]{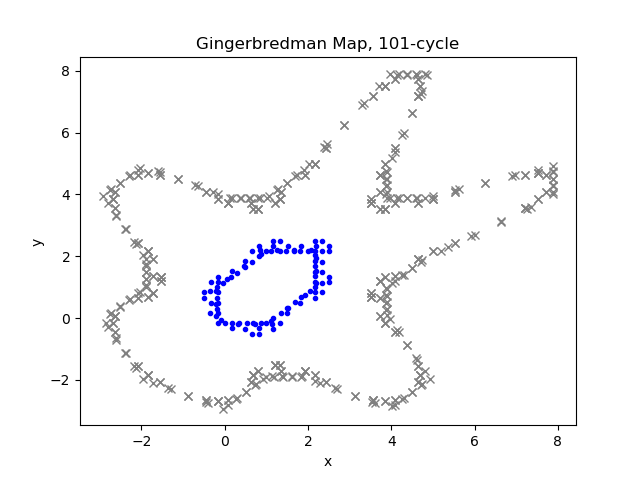}}%
 \caption{A 101-cycle of the Gingerbredman mapping \eqref{12}.}\label{fig:graph5}\end{figure}

\subsection{Prey-predator mapping}\label{ssect5.6}
For the prey-predator mapping:
\begin{equation}\label{13}
x_{n+1}=x_n\exp\left(a\left(1-x_n\right)-by_n\right),\ y_{n+1}=x_n\left(1-\exp\left(-cy_n\right)\right),
\quad a=3,\ b=5,\ c=5
\end{equation}
a corresponding 101-cyclic point is illustrated in \figref{fig:graph6}.

\begin{figure}[H]
\centerline{\Oldincludegraphics[scale=0.8]{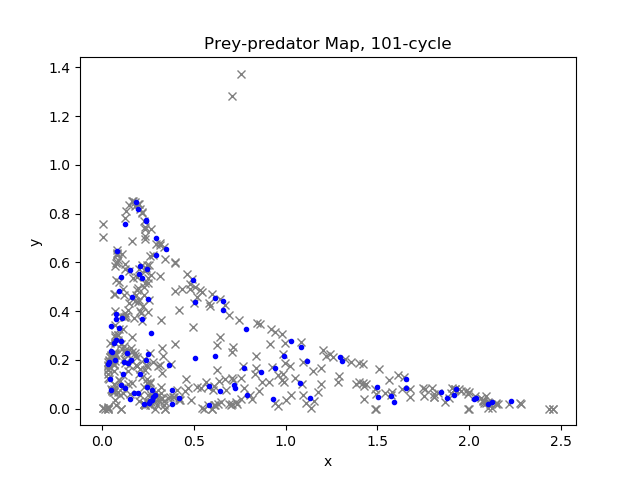}}%
 \caption{A 101-cycle of the prey-predator system \eqref{13}.}\label{fig:graph6}\end{figure}

\subsection{Delayed logistic mapping}\label{ssect5.7}
\figref{fig:graph7} shows a 101-cyclic point of the delayed logistic mapping:
\begin{equation}\label{14}
x_{n+1}=hx_n\left(1-y_n\right),\ y_{n+1}=x_n,\quad h=2.27
\end{equation}

\begin{figure}[H]
\centerline{\Oldincludegraphics[scale=0.8]{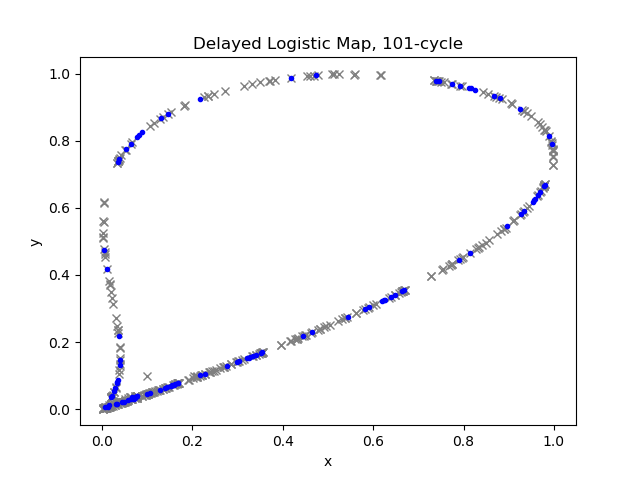}}%
 \caption{A 101-cycle of the delayed logistic mapping system \eqref{14}.}\label{fig:graph7}\end{figure}

\subsection{H\'enon mapping}\label{ssect5.8}
The H\'enon mapping:
\begin{equation}\label{15}
x_{n+1}=1+ax_n^2+y_n,\ y_{n+1}=bx_n,\quad  a=-1.40000001,\ b=0.30000002
\end{equation}
has a 101-cyclic point represented in \figref{fig:graph8}.
\begin{figure}[H]
\centerline{\Oldincludegraphics[scale=0.8]{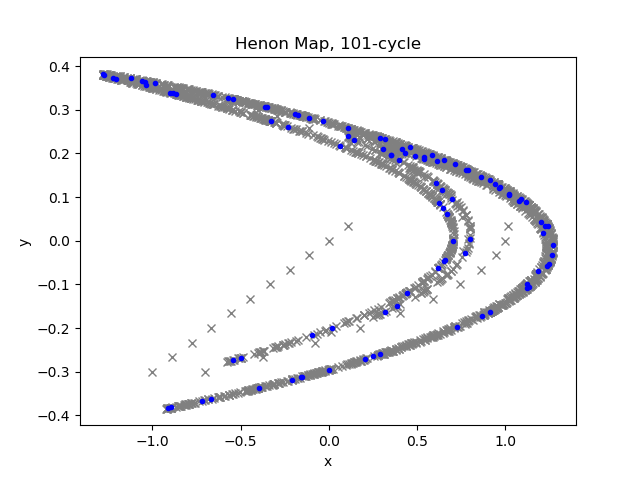}}%
 \caption{A 101-cycle of H\'enon system \eqref{15}.}\label{fig:graph8}\end{figure}

\subsection{Elhadj-Sprott mapping}\label{ssect5.9}
The Elhadj-Sprott mapping:
\begin{equation}\label{16}
x_{n+1}=1+a\sin x_n+by_n,\ y_{n+1}=x_n,\quad a=-4.0,\ b=0.9
\end{equation}
has a 101-cyclic point illustrated in \figref{fig:graph9}.
\begin{figure}[H]
\centerline{\Oldincludegraphics[scale=0.8]{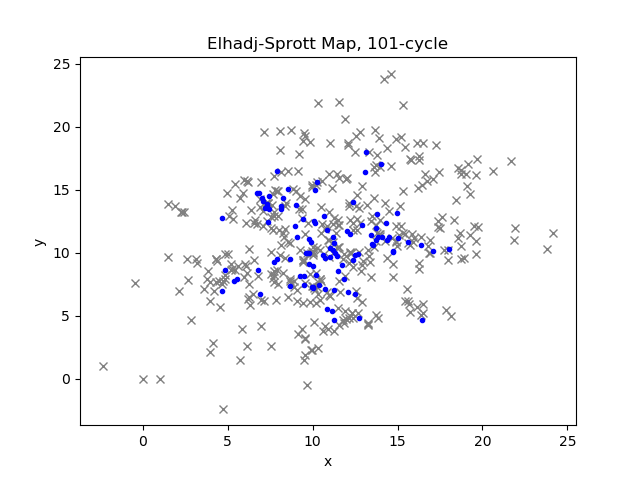}}%
 \caption{A 101-cycle of the Elhadge-Sprott system \eqref{16}.}\label{fig:graph9}\end{figure}

\subsection{Lozi mapping}\label{ssect5.10}
The Lozi mapping:
\begin{equation}\label{17}
x_{n+1}=1+a\left|x_n\right|+by_n,\ y_{n+1}=x_n,\quad a=-1.7,\ b=0.5
\end{equation}
has a 101-cyclic point shown in \figref{fig:graph10}.
\begin{figure}[H]
\centerline{\Oldincludegraphics[scale=0.8]{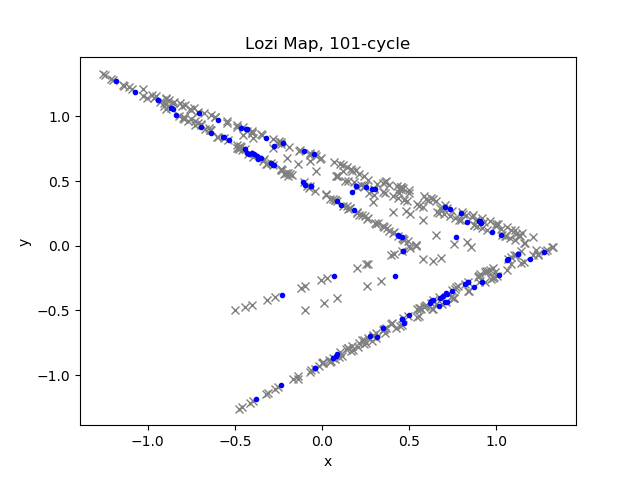}}%
 \caption{A 101-cycle of the Lozi system \eqref{17}.}\label{fig:graph10}\end{figure}

\subsection{Ikeda mapping}\label{ssect5.11}
The Ikeda mapping is given by the equations:
\begin{equation}\label{18}
x_{n+1}=1+u\left(x_n\cos\tau_n-y_n\sin\tau_n\right),\ y_{n+1}=u\left(x_n\sin\tau_n+y_n\cos\tau_n\right),
\end{equation}
where $u=0.9$, $\displaystyle\tau_n=0.4-\frac6{1+x_n^2+y_n^2}$.

The mapping has a 101-cyclic point illustrated in \figref{fig:graph11}.
\begin{figure}[H]
\centerline{\Oldincludegraphics[scale=0.8]{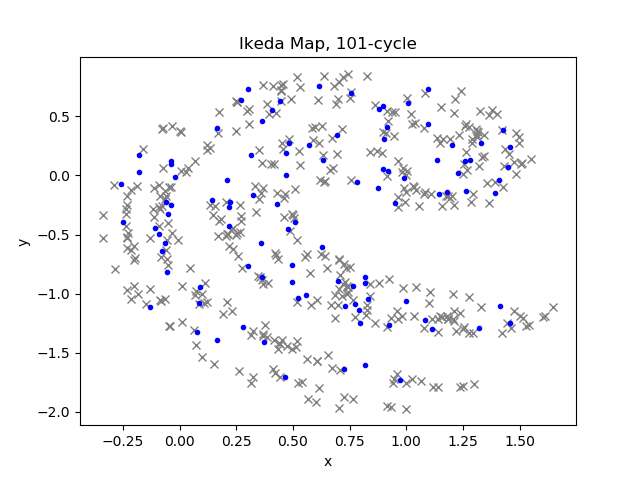}}%
 \caption{A 101-cycle of the Ikeda system \eqref{18}.}\label{fig:graph11}\end{figure}

\subsection{Holmes cubic mapping}\label{ssect5.12}
The Holmes cubic mapping:
\begin{equation}\label{19}
x_{n+1}=y_n,\ y_{n+1}=ax_n+by_n-y_n^3,\quad a=-0.2,\ b=2.77
\end{equation}
has a 101-cyclic point shown in \figref{fig:graph12}.
\begin{figure}[H]
\centerline{\Oldincludegraphics[scale=0.8]{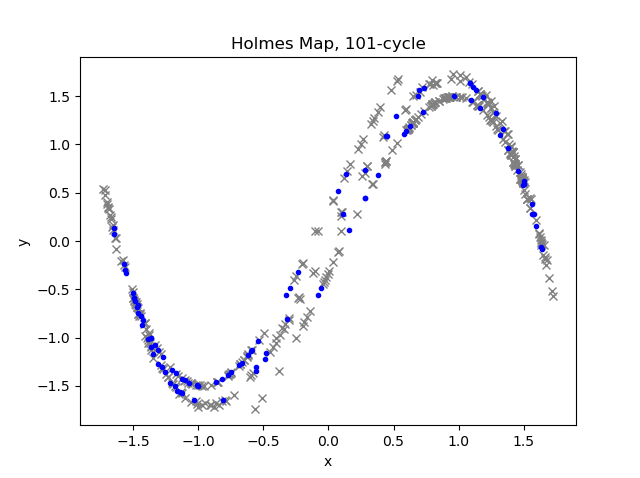}}%
 \caption{A 101-cycle of the Holmes cubic system \eqref{19}.}\label{fig:graph12}\end{figure}

%

\section{Conclusion}

This article deals with the problem of stabilization for nonlinear systems of two categories:
those unstable and those with a priori unknown periodic orbits at discrete time.
A well-known method of stabilizing controls, called the predictive control method,
 first proposed by B.T.Polyak, have been thoroughly investigated in this work.
We have found that this method has several disadvantages: it is necessary to know the cycle exact multiplier or its sufficiently accurate estimate even in the scalar case;
in the vector case, one must know the whole cycle Jacobi matrix;
consequently, the proposed control does not have the required robustness with respect to the system parameters perturbations;
the control gain coefficients have different signs, which can trigger the initial system multiplier's shifting beyond the central unit circle (where it  lies) when applying the control;
therefore, the gain coefficients must be small, and, in order to evaluate them at every instance, we need to know the multiplier's value.

All these shortcomings imply the necessity to modify the predictive control method.
We propose not only to use the first and last iterations of the original mapping, but also all previous ones,
by considering their linear combination.
This linear combination's coefficients are sought as being of a special polynomial, characterized by certain properties. As a result, it was possible for us to extend the predictive control scope.
In addition, if the coefficients are non-negative, then for the initial system cycle multipliers lying in the central unit circle the corresponding multipliers of the control system cycle become closer to zero.
An algorithm is given as a special case of this method application, for finding a cycle of a given length when its multipliers are known.

One of the possible directions for future research is related to investigating
 new control schemes that combine the use of control system previous states and the initial system predicted states,
i.e. the predictive control shall be considered together with the semi-linear control \cite{19} as follows:
\begin{equation}\label{20}
\left\{
\begin{array}{llll}
X_n=\sum\limits_{j=1}^{N_1}a_jx_{n-jT+T}\\
Y_n=\sum\limits_{j=1}^{N_2}b_jx_{n-jT+1}\\
F(x)=\sum\limits_{j=1}^{N_3}\theta_jf^{((j-1)T+1)}(x)\\
x_{n+1}=(1-\gamma)F\left(X_n\right)+\gamma Y_n
\end{array}
\right.
\end{equation}
where  $\sum\limits_{j=1}^{N_1}a_j=1$,  $\sum\limits_{j=1}^{N_2}b_j=1$,  $\sum\limits_{j=1}^{N_3}\theta_j=1$.
Clearly, the $T$-cycles of systems \eqref{1} and \eqref{20} coincide.
The conditions of the system \eqref{20} $T$-cycle local asymptotic stability can be formulated as
\begin{equation*}
\mu_j\left[r\left(\mu_j\right)\right]^T\in\left(\overline{\mathbb C}\setminus\Phi\left(\overline{{\mathbb D}}\right)\right)^*,\quad j=1,\ldots,m,
\end{equation*}
\begin{equation*}
\Phi(z)=(1-\gamma)^T\frac{z(q(z))^T}{(1-\gamma p(z))^T},\quad
q(z)=\sum\limits_{j=1}^{N_1}a_jz^{j-1},\quad
p(z)=\sum\limits_{j=1}^{N_2}b_jz^{j-1},
\end{equation*}
where $\overline{\mathbb C}$  is an extended complex plane, and the asterisk denotes the reciprocal operation:   $\displaystyle(z)^*=\frac1{\overline{z}}$.

	The semilinear control method (when $N_3=1$ in \eqref{20}) has also certain disadvantages \cite{19,20}.
Further studies shall aim to eliminating (reducing)
inherent disadvantages of predictive control and semi-linear control,
synthesizing these approaches together.


\newpage
\section*{Appendix}

\subsection{The Experiments Python Code}
    \definecolor{urlcolor}{rgb}{0,.145,.698}
    \definecolor{linkcolor}{rgb}{.71,0.21,0.01}
    \definecolor{citecolor}{rgb}{.12,.54,.11}

    \definecolor{ansi-black}{HTML}{3E424D}
    \definecolor{ansi-black-intense}{HTML}{282C36}
    \definecolor{ansi-red}{HTML}{E75C58}
    \definecolor{ansi-red-intense}{HTML}{B22B31}
    \definecolor{ansi-green}{HTML}{00A250}
    \definecolor{ansi-green-intense}{HTML}{007427}
    \definecolor{ansi-yellow}{HTML}{DDB62B}
    \definecolor{ansi-yellow-intense}{HTML}{B27D12}
    \definecolor{ansi-blue}{HTML}{208FFB}
    \definecolor{ansi-blue-intense}{HTML}{0065CA}
    \definecolor{ansi-magenta}{HTML}{D160C4}
    \definecolor{ansi-magenta-intense}{HTML}{A03196}
    \definecolor{ansi-cyan}{HTML}{60C6C8}
    \definecolor{ansi-cyan-intense}{HTML}{258F8F}
    \definecolor{ansi-white}{HTML}{C5C1B4}
    \definecolor{ansi-white-intense}{HTML}{A1A6B2}
    \definecolor{ansi-default-inverse-fg}{HTML}{FFFFFF}
    \definecolor{ansi-default-inverse-bg}{HTML}{000000}

    \providecommand{\tightlist}{%
      \setlength{\itemsep}{0pt}\setlength{\parskip}{0pt}}
    \DefineVerbatimEnvironment{Highlighting}{Verbatim}{commandchars=\\\{\}}
    \newenvironment{Shaded}{}{}
    \newcommand{\KeywordTok}[1]{\textcolor[rgb]{0.00,0.44,0.13}{\textbf{{#1}}}}
    \newcommand{\DataTypeTok}[1]{\textcolor[rgb]{0.56,0.13,0.00}{{#1}}}
    \newcommand{\DecValTok}[1]{\textcolor[rgb]{0.25,0.63,0.44}{{#1}}}
    \newcommand{\BaseNTok}[1]{\textcolor[rgb]{0.25,0.63,0.44}{{#1}}}
    \newcommand{\FloatTok}[1]{\textcolor[rgb]{0.25,0.63,0.44}{{#1}}}
    \newcommand{\CharTok}[1]{\textcolor[rgb]{0.25,0.44,0.63}{{#1}}}
    \newcommand{\StringTok}[1]{\textcolor[rgb]{0.25,0.44,0.63}{{#1}}}
    \newcommand{\CommentTok}[1]{\textcolor[rgb]{0.38,0.63,0.69}{\textit{{#1}}}}
    \newcommand{\OtherTok}[1]{\textcolor[rgb]{0.00,0.44,0.13}{{#1}}}
    \newcommand{\AlertTok}[1]{\textcolor[rgb]{1.00,0.00,0.00}{\textbf{{#1}}}}
    \newcommand{\FunctionTok}[1]{\textcolor[rgb]{0.02,0.16,0.49}{{#1}}}
    \newcommand{\RegionMarkerTok}[1]{{#1}}
    \newcommand{\ErrorTok}[1]{\textcolor[rgb]{1.00,0.00,0.00}{\textbf{{#1}}}}
    \newcommand{\NormalTok}[1]{{#1}}

    \newcommand{\ConstantTok}[1]{\textcolor[rgb]{0.53,0.00,0.00}{{#1}}}
    \newcommand{\SpecialCharTok}[1]{\textcolor[rgb]{0.25,0.44,0.63}{{#1}}}
    \newcommand{\VerbatimStringTok}[1]{\textcolor[rgb]{0.25,0.44,0.63}{{#1}}}
    \newcommand{\SpecialStringTok}[1]{\textcolor[rgb]{0.73,0.40,0.53}{{#1}}}
    \newcommand{\ImportTok}[1]{{#1}}
    \newcommand{\DocumentationTok}[1]{\textcolor[rgb]{0.73,0.13,0.13}{\textit{{#1}}}}
    \newcommand{\AnnotationTok}[1]{\textcolor[rgb]{0.38,0.63,0.69}{\textbf{\textit{{#1}}}}}
    \newcommand{\CommentVarTok}[1]{\textcolor[rgb]{0.38,0.63,0.69}{\textbf{\textit{{#1}}}}}
    \newcommand{\VariableTok}[1]{\textcolor[rgb]{0.10,0.09,0.49}{{#1}}}
    \newcommand{\ControlFlowTok}[1]{\textcolor[rgb]{0.00,0.44,0.13}{\textbf{{#1}}}}
    \newcommand{\OperatorTok}[1]{\textcolor[rgb]{0.40,0.40,0.40}{{#1}}}
    \newcommand{\BuiltInTok}[1]{{#1}}
    \newcommand{\ExtensionTok}[1]{{#1}}
    \newcommand{\PreprocessorTok}[1]{\textcolor[rgb]{0.74,0.48,0.00}{{#1}}}
    \newcommand{\AttributeTok}[1]{\textcolor[rgb]{0.49,0.56,0.16}{{#1}}}
    \newcommand{\InformationTok}[1]{\textcolor[rgb]{0.38,0.63,0.69}{\textbf{\textit{{#1}}}}}
    \newcommand{\WarningTok}[1]{\textcolor[rgb]{0.38,0.63,0.69}{\textbf{\textit{{#1}}}}}

    \def\br{\hspace*{\fill} \\* }
    \def\gt{>}
    \def\lt{<}
    \let\Oldtex\TeX
    \let\Oldlatex\LaTeX
    \renewcommand{\TeX}{\textrm{\Oldtex}}
    \renewcommand{\LaTeX}{\textrm{\Oldlatex}}

\makeatletter
\def\PY@reset{\let\PY@it=\relax \let\PY@bf=\relax%
    \let\PY@ul=\relax \let\PY@tc=\relax%
    \let\PY@bc=\relax \let\PY@ff=\relax}
\def\PY@tok#1{\csname PY@tok@#1\endcsname}
\def\PY@toks#1+{\ifx\relax#1\empty\else%
    \PY@tok{#1}\expandafter\PY@toks\fi}
\def\PY@do#1{\PY@bc{\PY@tc{\PY@ul{%
    \PY@it{\PY@bf{\PY@ff{#1}}}}}}}
\def\PY#1#2{\PY@reset\PY@toks#1+\relax+\PY@do{#2}}

\expandafter\def\csname PY@tok@w\endcsname{\def\PY@tc##1{\textcolor[rgb]{0.73,0.73,0.73}{##1}}}
\expandafter\def\csname PY@tok@c\endcsname{\let\PY@it=\textit\def\PY@tc##1{\textcolor[rgb]{0.25,0.50,0.50}{##1}}}
\expandafter\def\csname PY@tok@cp\endcsname{\def\PY@tc##1{\textcolor[rgb]{0.74,0.48,0.00}{##1}}}
\expandafter\def\csname PY@tok@k\endcsname{\let\PY@bf=\textbf\def\PY@tc##1{\textcolor[rgb]{0.00,0.50,0.00}{##1}}}
\expandafter\def\csname PY@tok@kp\endcsname{\def\PY@tc##1{\textcolor[rgb]{0.00,0.50,0.00}{##1}}}
\expandafter\def\csname PY@tok@kt\endcsname{\def\PY@tc##1{\textcolor[rgb]{0.69,0.00,0.25}{##1}}}
\expandafter\def\csname PY@tok@o\endcsname{\def\PY@tc##1{\textcolor[rgb]{0.40,0.40,0.40}{##1}}}
\expandafter\def\csname PY@tok@ow\endcsname{\let\PY@bf=\textbf\def\PY@tc##1{\textcolor[rgb]{0.67,0.13,1.00}{##1}}}
\expandafter\def\csname PY@tok@nb\endcsname{\def\PY@tc##1{\textcolor[rgb]{0.00,0.50,0.00}{##1}}}
\expandafter\def\csname PY@tok@nf\endcsname{\def\PY@tc##1{\textcolor[rgb]{0.00,0.00,1.00}{##1}}}
\expandafter\def\csname PY@tok@nc\endcsname{\let\PY@bf=\textbf\def\PY@tc##1{\textcolor[rgb]{0.00,0.00,1.00}{##1}}}
\expandafter\def\csname PY@tok@nn\endcsname{\let\PY@bf=\textbf\def\PY@tc##1{\textcolor[rgb]{0.00,0.00,1.00}{##1}}}
\expandafter\def\csname PY@tok@ne\endcsname{\let\PY@bf=\textbf\def\PY@tc##1{\textcolor[rgb]{0.82,0.25,0.23}{##1}}}
\expandafter\def\csname PY@tok@nv\endcsname{\def\PY@tc##1{\textcolor[rgb]{0.10,0.09,0.49}{##1}}}
\expandafter\def\csname PY@tok@no\endcsname{\def\PY@tc##1{\textcolor[rgb]{0.53,0.00,0.00}{##1}}}
\expandafter\def\csname PY@tok@nl\endcsname{\def\PY@tc##1{\textcolor[rgb]{0.63,0.63,0.00}{##1}}}
\expandafter\def\csname PY@tok@ni\endcsname{\let\PY@bf=\textbf\def\PY@tc##1{\textcolor[rgb]{0.60,0.60,0.60}{##1}}}
\expandafter\def\csname PY@tok@na\endcsname{\def\PY@tc##1{\textcolor[rgb]{0.49,0.56,0.16}{##1}}}
\expandafter\def\csname PY@tok@nt\endcsname{\let\PY@bf=\textbf\def\PY@tc##1{\textcolor[rgb]{0.00,0.50,0.00}{##1}}}
\expandafter\def\csname PY@tok@nd\endcsname{\def\PY@tc##1{\textcolor[rgb]{0.67,0.13,1.00}{##1}}}
\expandafter\def\csname PY@tok@s\endcsname{\def\PY@tc##1{\textcolor[rgb]{0.73,0.13,0.13}{##1}}}
\expandafter\def\csname PY@tok@sd\endcsname{\let\PY@it=\textit\def\PY@tc##1{\textcolor[rgb]{0.73,0.13,0.13}{##1}}}
\expandafter\def\csname PY@tok@si\endcsname{\let\PY@bf=\textbf\def\PY@tc##1{\textcolor[rgb]{0.73,0.40,0.53}{##1}}}
\expandafter\def\csname PY@tok@se\endcsname{\let\PY@bf=\textbf\def\PY@tc##1{\textcolor[rgb]{0.73,0.40,0.13}{##1}}}
\expandafter\def\csname PY@tok@sr\endcsname{\def\PY@tc##1{\textcolor[rgb]{0.73,0.40,0.53}{##1}}}
\expandafter\def\csname PY@tok@ss\endcsname{\def\PY@tc##1{\textcolor[rgb]{0.10,0.09,0.49}{##1}}}
\expandafter\def\csname PY@tok@sx\endcsname{\def\PY@tc##1{\textcolor[rgb]{0.00,0.50,0.00}{##1}}}
\expandafter\def\csname PY@tok@m\endcsname{\def\PY@tc##1{\textcolor[rgb]{0.40,0.40,0.40}{##1}}}
\expandafter\def\csname PY@tok@gh\endcsname{\let\PY@bf=\textbf\def\PY@tc##1{\textcolor[rgb]{0.00,0.00,0.50}{##1}}}
\expandafter\def\csname PY@tok@gu\endcsname{\let\PY@bf=\textbf\def\PY@tc##1{\textcolor[rgb]{0.50,0.00,0.50}{##1}}}
\expandafter\def\csname PY@tok@gd\endcsname{\def\PY@tc##1{\textcolor[rgb]{0.63,0.00,0.00}{##1}}}
\expandafter\def\csname PY@tok@gi\endcsname{\def\PY@tc##1{\textcolor[rgb]{0.00,0.63,0.00}{##1}}}
\expandafter\def\csname PY@tok@gr\endcsname{\def\PY@tc##1{\textcolor[rgb]{1.00,0.00,0.00}{##1}}}
\expandafter\def\csname PY@tok@ge\endcsname{\let\PY@it=\textit}
\expandafter\def\csname PY@tok@gs\endcsname{\let\PY@bf=\textbf}
\expandafter\def\csname PY@tok@gp\endcsname{\let\PY@bf=\textbf\def\PY@tc##1{\textcolor[rgb]{0.00,0.00,0.50}{##1}}}
\expandafter\def\csname PY@tok@go\endcsname{\def\PY@tc##1{\textcolor[rgb]{0.53,0.53,0.53}{##1}}}
\expandafter\def\csname PY@tok@gt\endcsname{\def\PY@tc##1{\textcolor[rgb]{0.00,0.27,0.87}{##1}}}
\expandafter\def\csname PY@tok@err\endcsname{\def\PY@bc##1{\setlength{\fboxsep}{0pt}\fcolorbox[rgb]{1.00,0.00,0.00}{1,1,1}{\strut ##1}}}
\expandafter\def\csname PY@tok@kc\endcsname{\let\PY@bf=\textbf\def\PY@tc##1{\textcolor[rgb]{0.00,0.50,0.00}{##1}}}
\expandafter\def\csname PY@tok@kd\endcsname{\let\PY@bf=\textbf\def\PY@tc##1{\textcolor[rgb]{0.00,0.50,0.00}{##1}}}
\expandafter\def\csname PY@tok@kn\endcsname{\let\PY@bf=\textbf\def\PY@tc##1{\textcolor[rgb]{0.00,0.50,0.00}{##1}}}
\expandafter\def\csname PY@tok@kr\endcsname{\let\PY@bf=\textbf\def\PY@tc##1{\textcolor[rgb]{0.00,0.50,0.00}{##1}}}
\expandafter\def\csname PY@tok@bp\endcsname{\def\PY@tc##1{\textcolor[rgb]{0.00,0.50,0.00}{##1}}}
\expandafter\def\csname PY@tok@fm\endcsname{\def\PY@tc##1{\textcolor[rgb]{0.00,0.00,1.00}{##1}}}
\expandafter\def\csname PY@tok@vc\endcsname{\def\PY@tc##1{\textcolor[rgb]{0.10,0.09,0.49}{##1}}}
\expandafter\def\csname PY@tok@vg\endcsname{\def\PY@tc##1{\textcolor[rgb]{0.10,0.09,0.49}{##1}}}
\expandafter\def\csname PY@tok@vi\endcsname{\def\PY@tc##1{\textcolor[rgb]{0.10,0.09,0.49}{##1}}}
\expandafter\def\csname PY@tok@vm\endcsname{\def\PY@tc##1{\textcolor[rgb]{0.10,0.09,0.49}{##1}}}
\expandafter\def\csname PY@tok@sa\endcsname{\def\PY@tc##1{\textcolor[rgb]{0.73,0.13,0.13}{##1}}}
\expandafter\def\csname PY@tok@sb\endcsname{\def\PY@tc##1{\textcolor[rgb]{0.73,0.13,0.13}{##1}}}
\expandafter\def\csname PY@tok@sc\endcsname{\def\PY@tc##1{\textcolor[rgb]{0.73,0.13,0.13}{##1}}}
\expandafter\def\csname PY@tok@dl\endcsname{\def\PY@tc##1{\textcolor[rgb]{0.73,0.13,0.13}{##1}}}
\expandafter\def\csname PY@tok@s2\endcsname{\def\PY@tc##1{\textcolor[rgb]{0.73,0.13,0.13}{##1}}}
\expandafter\def\csname PY@tok@sh\endcsname{\def\PY@tc##1{\textcolor[rgb]{0.73,0.13,0.13}{##1}}}
\expandafter\def\csname PY@tok@s1\endcsname{\def\PY@tc##1{\textcolor[rgb]{0.73,0.13,0.13}{##1}}}
\expandafter\def\csname PY@tok@mb\endcsname{\def\PY@tc##1{\textcolor[rgb]{0.40,0.40,0.40}{##1}}}
\expandafter\def\csname PY@tok@mf\endcsname{\def\PY@tc##1{\textcolor[rgb]{0.40,0.40,0.40}{##1}}}
\expandafter\def\csname PY@tok@mh\endcsname{\def\PY@tc##1{\textcolor[rgb]{0.40,0.40,0.40}{##1}}}
\expandafter\def\csname PY@tok@mi\endcsname{\def\PY@tc##1{\textcolor[rgb]{0.40,0.40,0.40}{##1}}}
\expandafter\def\csname PY@tok@il\endcsname{\def\PY@tc##1{\textcolor[rgb]{0.40,0.40,0.40}{##1}}}
\expandafter\def\csname PY@tok@mo\endcsname{\def\PY@tc##1{\textcolor[rgb]{0.40,0.40,0.40}{##1}}}
\expandafter\def\csname PY@tok@ch\endcsname{\let\PY@it=\textit\def\PY@tc##1{\textcolor[rgb]{0.25,0.50,0.50}{##1}}}
\expandafter\def\csname PY@tok@cm\endcsname{\let\PY@it=\textit\def\PY@tc##1{\textcolor[rgb]{0.25,0.50,0.50}{##1}}}
\expandafter\def\csname PY@tok@cpf\endcsname{\let\PY@it=\textit\def\PY@tc##1{\textcolor[rgb]{0.25,0.50,0.50}{##1}}}
\expandafter\def\csname PY@tok@c1\endcsname{\let\PY@it=\textit\def\PY@tc##1{\textcolor[rgb]{0.25,0.50,0.50}{##1}}}
\expandafter\def\csname PY@tok@cs\endcsname{\let\PY@it=\textit\def\PY@tc##1{\textcolor[rgb]{0.25,0.50,0.50}{##1}}}

\def\PYZbs{\char`\\}
\def\PYZus{\char`\_}
\def\PYZob{\char`\{}
\def\PYZcb{\char`\}}
\def\PYZca{\char`\^}
\def\PYZam{\char`\&}
\def\PYZlt{\char`\<}
\def\PYZgt{\char`\>}
\def\PYZsh{\char`\#}
\def\PYZpc{\char`\%}
\def\PYZdl{\char`\$}
\def\PYZhy{\char`\-}
\def\PYZsq{\char`\'}
\def\PYZdq{\char`\"}
\def\PYZti{\char`\~}
\def\PYZat{@}
\def\PYZlb{[}
\def\PYZrb{]}
\makeatother

    \makeatletter
        \newbox\Wrappedcontinuationbox
        \newbox\Wrappedvisiblespacebox
        \newcommand*\Wrappedvisiblespace {\textcolor{red}{\textvisiblespace}}
        \newcommand*\Wrappedcontinuationsymbol {\textcolor{red}{\llap{\tiny$\m@th\hookrightarrow$}}}
        \newcommand*\Wrappedcontinuationindent {3ex }
        \newcommand*\Wrappedafterbreak {\kern\Wrappedcontinuationindent\copy\Wrappedcontinuationbox}
        \newcommand*\Wrappedbreaksatspecials {%
            \def\PYGZus{\discretionary{\char`\_}{\Wrappedafterbreak}{\char`\_}}%
            \def\PYGZob{\discretionary{}{\Wrappedafterbreak\char`\{}{\char`\{}}%
            \def\PYGZcb{\discretionary{\char`\}}{\Wrappedafterbreak}{\char`\}}}%
            \def\PYGZca{\discretionary{\char`\^}{\Wrappedafterbreak}{\char`\^}}%
            \def\PYGZam{\discretionary{\char`\&}{\Wrappedafterbreak}{\char`\&}}%
            \def\PYGZlt{\discretionary{}{\Wrappedafterbreak\char`\<}{\char`\<}}%
            \def\PYGZgt{\discretionary{\char`\>}{\Wrappedafterbreak}{\char`\>}}%
            \def\PYGZsh{\discretionary{}{\Wrappedafterbreak\char`\#}{\char`\#}}%
            \def\PYGZpc{\discretionary{}{\Wrappedafterbreak\char`\%}{\char`\%}}%
            \def\PYGZdl{\discretionary{}{\Wrappedafterbreak\char`\$}{\char`\$}}%
            \def\PYGZhy{\discretionary{\char`\-}{\Wrappedafterbreak}{\char`\-}}%
            \def\PYGZsq{\discretionary{}{\Wrappedafterbreak\textquotesingle}{\textquotesingle}}%
            \def\PYGZdq{\discretionary{}{\Wrappedafterbreak\char`\"}{\char`\"}}%
            \def\PYGZti{\discretionary{\char`\~}{\Wrappedafterbreak}{\char`\~}}%
        }
        \newcommand*\Wrappedbreaksatpunct {%
            \lccode`\~`\.\lowercase{\def~}{\discretionary{\hbox{\char`\.}}{\Wrappedafterbreak}{\hbox{\char`\.}}}%
            \lccode`\~`\,\lowercase{\def~}{\discretionary{\hbox{\char`\,}}{\Wrappedafterbreak}{\hbox{\char`\,}}}%
            \lccode`\~`\;\lowercase{\def~}{\discretionary{\hbox{\char`\;}}{\Wrappedafterbreak}{\hbox{\char`\;}}}%
            \lccode`\~`\:\lowercase{\def~}{\discretionary{\hbox{\char`\:}}{\Wrappedafterbreak}{\hbox{\char`\:}}}%
            \lccode`\~`\?\lowercase{\def~}{\discretionary{\hbox{\char`\?}}{\Wrappedafterbreak}{\hbox{\char`\?}}}%
            \lccode`\~`\!\lowercase{\def~}{\discretionary{\hbox{\char`\!}}{\Wrappedafterbreak}{\hbox{\char`\!}}}%
            \lccode`\~`\/\lowercase{\def~}{\discretionary{\hbox{\char`\/}}{\Wrappedafterbreak}{\hbox{\char`\/}}}%
            \catcode`\.\active
            \catcode`\,\active
            \catcode`\;\active
            \catcode`\:\active
            \catcode`\?\active
            \catcode`\!\active
            \catcode`\/\active
            \lccode`\~`\~ 	
        }
    \makeatother

    \let\OriginalVerbatim=\Verbatim
    \makeatletter
    \renewcommand{\Verbatim}[1][1]{%
        \sbox\Wrappedcontinuationbox {\Wrappedcontinuationsymbol}%
        \sbox\Wrappedvisiblespacebox {\FV@SetupFont\Wrappedvisiblespace}%
        \def\FancyVerbFormatLine ##1{\hsize\linewidth
            \vtop{\raggedright\hyphenpenalty\z@\exhyphenpenalty\z@
                \doublehyphendemerits\z@\finalhyphendemerits\z@
                \strut ##1\strut}%
        }%
        \def\FV@Space {%
            \nobreak\hskip\z@ plus\fontdimen3\font minus\fontdimen4\font
            \discretionary{\copy\Wrappedvisiblespacebox}{\Wrappedafterbreak}
            {\kern\fontdimen2\font}%
        }%

        \Wrappedbreaksatspecials
        \OriginalVerbatim[#1,codes*=\Wrappedbreaksatpunct]%
    }
    \makeatother

    \definecolor{incolor}{HTML}{303F9F}
    \definecolor{outcolor}{HTML}{D84315}
    \definecolor{cellborder}{HTML}{CFCFCF}
    \definecolor{cellbackground}{HTML}{F7F7F7}

    \newcommand{\prompt}[4]{
        \llap{{\color{#2}[#3]: #4}}\vspace{-1.25em}
    }

    \sloppy
    \hypersetup{
      breaklinks=true,  
      colorlinks=true,
      urlcolor=urlcolor,
      linkcolor=linkcolor,
      citecolor=citecolor,
      }




    \begin{tcolorbox}[breakable, size=fbox, boxrule=1pt, pad at break*=1mm,colback=cellbackground, colframe=cellborder]
\prompt{In}{incolor}{3}{\hspace{4pt}}


    \begin{center}
    \adjustimage{max size={0.9\linewidth}{0.9\paperheight}}{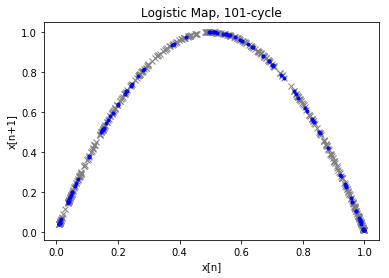}
    \end{center}
    { \hspace*{\fill} \\}

    \begin{center}
    \adjustimage{max size={0.9\linewidth}{0.9\paperheight}}{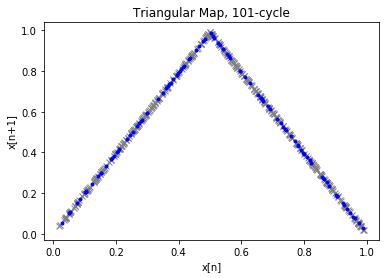}
    \end{center}
    { \hspace*{\fill} \\}

    \begin{center}
    \adjustimage{max size={0.9\linewidth}{0.9\paperheight}}{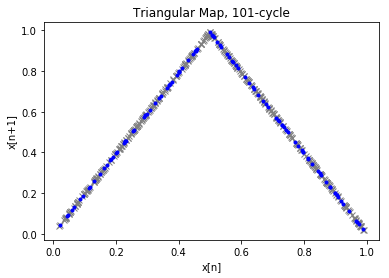}
    \end{center}
    { \hspace*{\fill} \\}

    \begin{center}
    \adjustimage{max size={0.9\linewidth}{0.9\paperheight}}{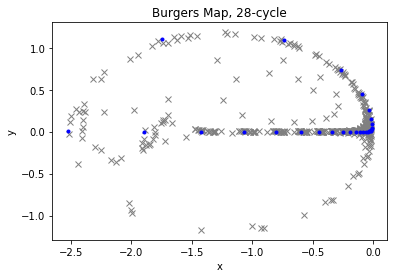}
    \end{center}
    { \hspace*{\fill} \\}

    \begin{center}
    \adjustimage{max size={0.9\linewidth}{0.9\paperheight}}{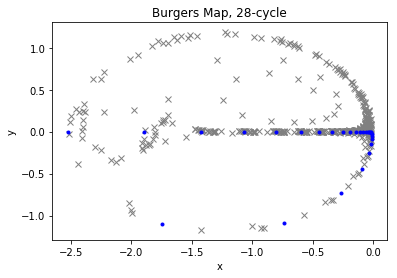}
    \end{center}
    { \hspace*{\fill} \\}

    \begin{center}
    \adjustimage{max size={0.9\linewidth}{0.9\paperheight}}{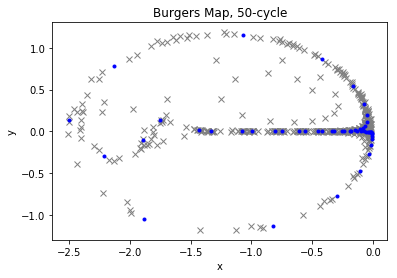}
    \end{center}
    { \hspace*{\fill} \\}

    \begin{center}
    \adjustimage{max size={0.9\linewidth}{0.9\paperheight}}{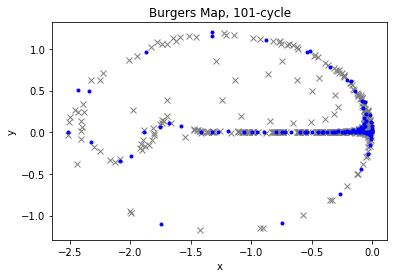}
    \end{center}
    { \hspace*{\fill} \\}

    \begin{center}
    \adjustimage{max size={0.9\linewidth}{0.9\paperheight}}{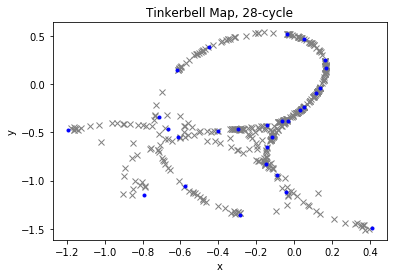}
    \end{center}
    { \hspace*{\fill} \\}

    \begin{center}
    \adjustimage{max size={0.9\linewidth}{0.9\paperheight}}{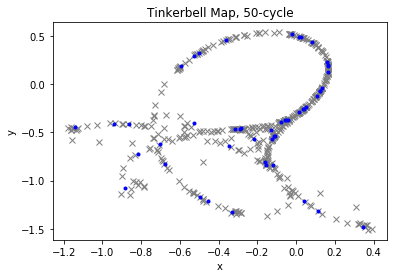}
    \end{center}
    { \hspace*{\fill} \\}

    \begin{center}
    \adjustimage{max size={0.9\linewidth}{0.9\paperheight}}{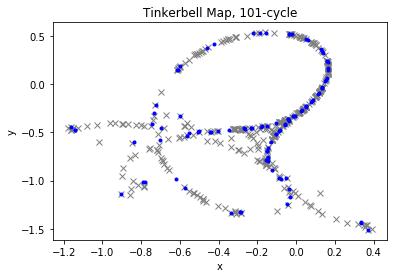}
    \end{center}
    { \hspace*{\fill} \\}

    \begin{center}
    \adjustimage{max size={0.9\linewidth}{0.9\paperheight}}{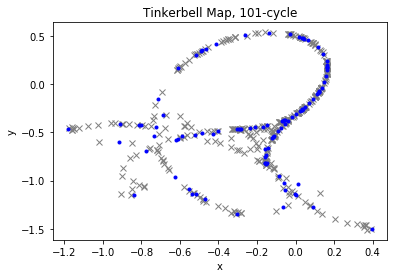}
    \end{center}
    { \hspace*{\fill} \\}

    \begin{center}
    \adjustimage{max size={0.9\linewidth}{0.9\paperheight}}{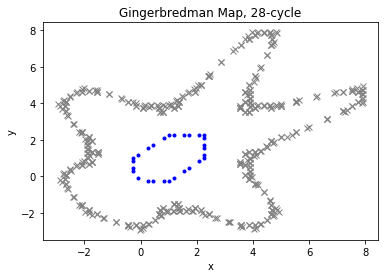}
    \end{center}
    { \hspace*{\fill} \\}

    \begin{center}
    \adjustimage{max size={0.9\linewidth}{0.9\paperheight}}{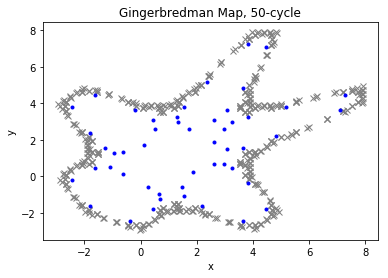}
    \end{center}
    { \hspace*{\fill} \\}

    \begin{center}
    \adjustimage{max size={0.9\linewidth}{0.9\paperheight}}{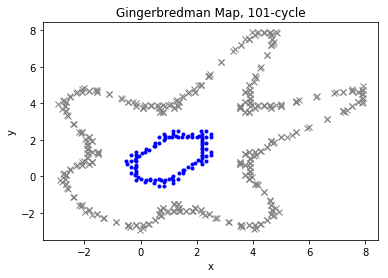}
    \end{center}
    { \hspace*{\fill} \\}

    \begin{center}
    \adjustimage{max size={0.9\linewidth}{0.9\paperheight}}{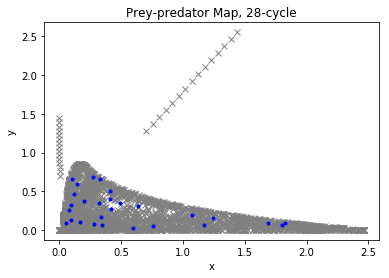}
    \end{center}
    { \hspace*{\fill} \\}

    \begin{center}
    \adjustimage{max size={0.9\linewidth}{0.9\paperheight}}{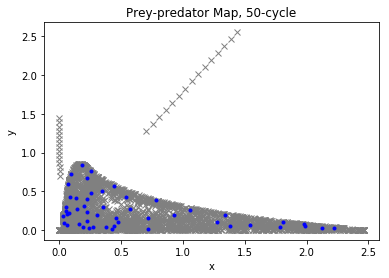}
    \end{center}
    { \hspace*{\fill} \\}

    \begin{center}
    \adjustimage{max size={0.9\linewidth}{0.9\paperheight}}{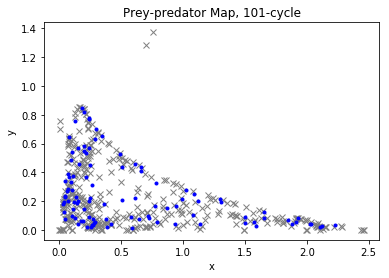}
    \end{center}
    { \hspace*{\fill} \\}

    \begin{center}
    \adjustimage{max size={0.9\linewidth}{0.9\paperheight}}{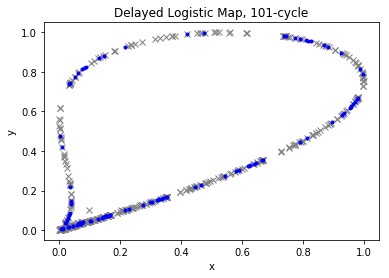}
    \end{center}
    { \hspace*{\fill} \\}

    \begin{center}
    \adjustimage{max size={0.9\linewidth}{0.9\paperheight}}{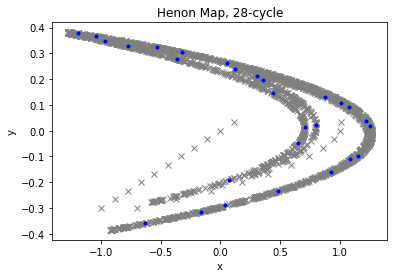}
    \end{center}
    { \hspace*{\fill} \\}

    \begin{center}
    \adjustimage{max size={0.9\linewidth}{0.9\paperheight}}{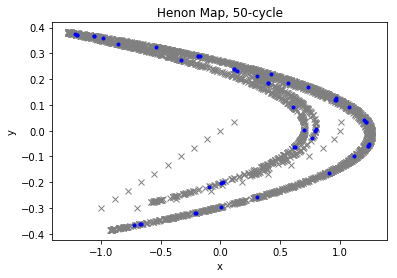}
    \end{center}
    { \hspace*{\fill} \\}

    \begin{center}
    \adjustimage{max size={0.9\linewidth}{0.9\paperheight}}{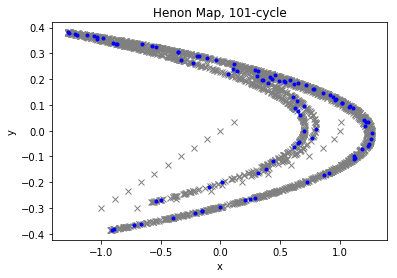}
    \end{center}
    { \hspace*{\fill} \\}

    \begin{center}
    \adjustimage{max size={0.9\linewidth}{0.9\paperheight}}{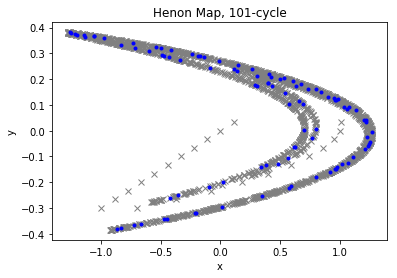}
    \end{center}
    { \hspace*{\fill} \\}

    \begin{center}
    \adjustimage{max size={0.9\linewidth}{0.9\paperheight}}{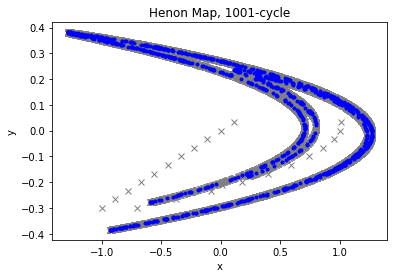}
    \end{center}
    { \hspace*{\fill} \\}

    \begin{center}
    \adjustimage{max size={0.9\linewidth}{0.9\paperheight}}{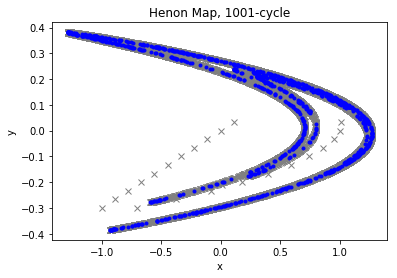}
    \end{center}
    { \hspace*{\fill} \\}

    \begin{center}
    \adjustimage{max size={0.9\linewidth}{0.9\paperheight}}{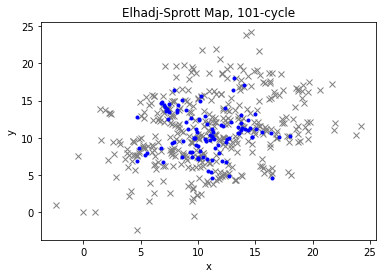}
    \end{center}
    { \hspace*{\fill} \\}

    \begin{center}
    \adjustimage{max size={0.9\linewidth}{0.9\paperheight}}{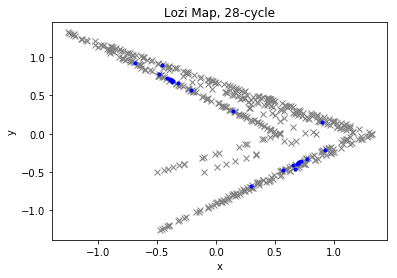}
    \end{center}
    { \hspace*{\fill} \\}

    \begin{center}
    \adjustimage{max size={0.9\linewidth}{0.9\paperheight}}{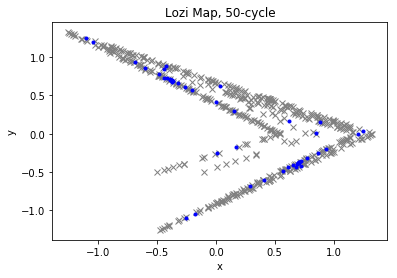}
    \end{center}
    { \hspace*{\fill} \\}

    \begin{center}
    \adjustimage{max size={0.9\linewidth}{0.9\paperheight}}{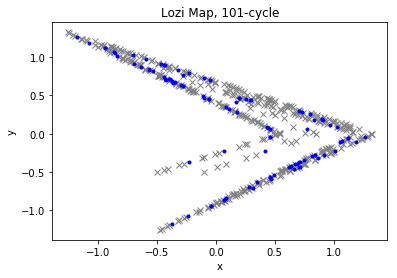}
    \end{center}
    { \hspace*{\fill} \\}

    \begin{center}
    \adjustimage{max size={0.9\linewidth}{0.9\paperheight}}{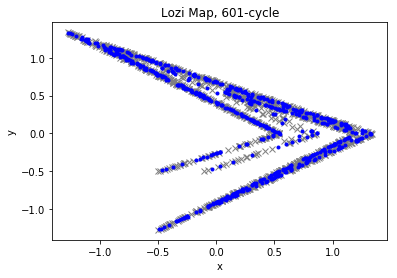}
    \end{center}
    { \hspace*{\fill} \\}

    \begin{center}
    \adjustimage{max size={0.9\linewidth}{0.9\paperheight}}{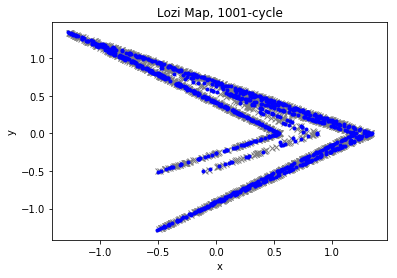}
    \end{center}
    { \hspace*{\fill} \\}

    \begin{center}
    \adjustimage{max size={0.9\linewidth}{0.9\paperheight}}{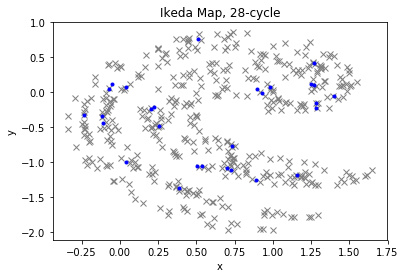}
    \end{center}
    { \hspace*{\fill} \\}

    \begin{center}
    \adjustimage{max size={0.9\linewidth}{0.9\paperheight}}{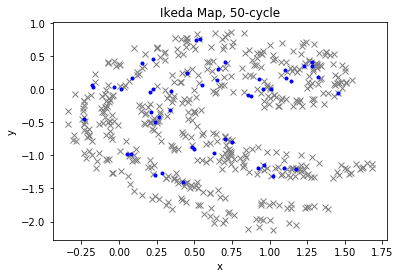}
    \end{center}
    { \hspace*{\fill} \\}

    \begin{center}
    \adjustimage{max size={0.9\linewidth}{0.9\paperheight}}{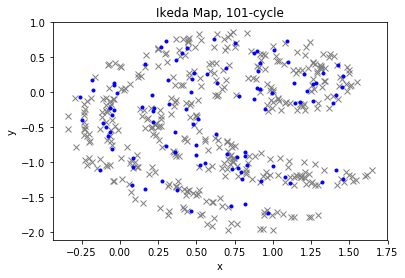}
    \end{center}
    { \hspace*{\fill} \\}

    \begin{center}
    \adjustimage{max size={0.9\linewidth}{0.9\paperheight}}{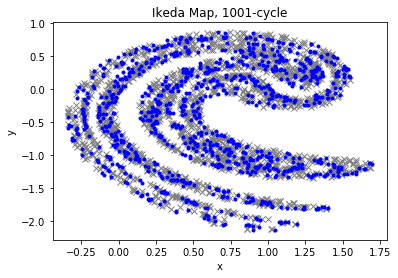}
    \end{center}
    { \hspace*{\fill} \\}

    \begin{center}
    \adjustimage{max size={0.9\linewidth}{0.9\paperheight}}{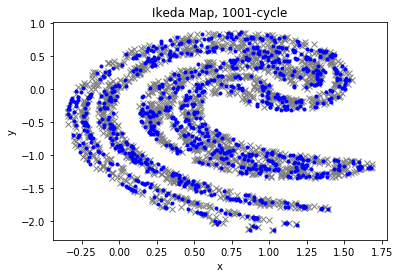}
    \end{center}
    { \hspace*{\fill} \\}

    \begin{center}
    \adjustimage{max size={0.9\linewidth}{0.9\paperheight}}{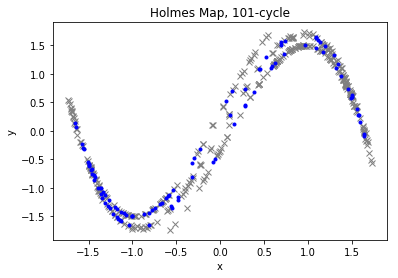}
    \end{center}
    { \hspace*{\fill} \\}

    \begin{center}
    \adjustimage{max size={0.9\linewidth}{0.9\paperheight}}{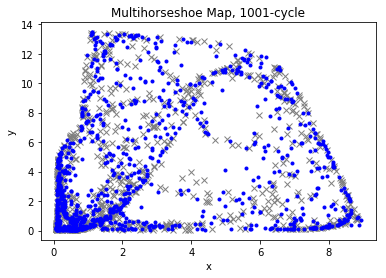}
    \end{center}
    { \hspace*{\fill} \\}

    \begin{tcolorbox}[breakable, size=fbox, boxrule=1pt, pad at break*=1mm,colback=cellbackground, colframe=cellborder]
\prompt{In}{incolor}{ }{\hspace{4pt}}
\begin{Verbatim}[commandchars=\\\{\}]

\end{Verbatim}
\end{tcolorbox}

\end{document}